\makeatletter
\let\@twosidetrue\@twosidefalse
\let\@mparswitchtrue\@mparswitchfalse
\makeatother

\documentclass{llncs}

\usepackage[utf8]{inputenc}
\usepackage[english]{babel}

\usepackage{booktabs} 
\usepackage[ruled]{algorithm2e} 
\usepackage{complexity}
\usepackage{graphicx}
\usepackage{geometry}
\usepackage{float}
\usepackage{xcolor,xspace,cite}
\usepackage{paralist}
\usepackage{refcount,amsfonts}
\usepackage{hyperref,lmodern}

\usepackage[textsize=tiny,textwidth=1.9cm,shadow]{todonotes}

\newtheorem{pr}{Problem}
\newcommand{\inp}{\textsf{Input: }} 

\newcommand{\ques}{\textsf{Question: }}

\pagestyle{plain}

\usetikzlibrary{arrows,decorations.markings,patterns,calc, positioning, backgrounds, shapes}
\tikzstyle{vertex} = [circle, draw=black, scale=0.7]
\tikzstyle{edgelabel} = [circle, fill=white, scale=0.8, inner sep=1pt]

\tikzstyle{pointedline} = [line width=0.3mm,
decoration={markings,
	mark=at position 1 with {\arrow[line width=0.2mm, scale= 2]{>}}
},
postaction={decorate}
]
\tikzset{
  treenode/.style = {shape=rectangle, rounded corners,
                     draw, align=left,
                     top color=white, bottom color=blue!20},
  root/.style     = {treenode, bottom color=red!30},
  env/.style      = {treenode, font=\ttfamily\normalsize},
  dummy/.style    = {circle,draw}
}

\title{Pareto optimal coalitions of fixed size
\thanks{Supported by the Cooperation of Excellences Grant (KEP-6/2018), by the Ministry of Human Resources under its New National Excellence Programme (ÚNKP-18-4-BME-331), the Hungarian Academy of Sciences under its Momentum Programme (LP2016-3/2016), its J\'anos Bolyai Research Fellowship, and OTKA grant K128611. This work is connected to the scientific program of the ``Development of quality-oriented and harmonized R+D+I strategy and functional model at BME'' project, supported by the New Hungary Development Plan (Project ID:T\'{A}MOP-4.2.1/B-09/1/KMR-2010-0002).}} 
\author{\'{A}gnes Cseh\inst{1} \and Tam\'{a}s Fleiner\inst{1,2} \and Petra Harj\'{a}n\inst{2}}
\institute{
Hungarian Academy of Sciences, Centre for Economic and Regional Studies, Institute of Economics
\and
Department of Computer Science and Information Theory, Budapest University of Technology and Economics
}

\begin{document}
\maketitle

\begin{abstract}
We tackle the problem of partitioning players into groups of fixed size, such as allocating eligible students to shared dormitory rooms. Each student submits preferences over the other individual students. We study several settings, which differ in the size of the rooms to be filled, the orderedness or completeness of the preferences, and the way of calculating the value of a coalition---based on the best or worst roommate in the coalition. In all cases, we determine the complexity of deciding the existence, and then finding a Pareto optimal assignment, and the complexity of verifying Pareto optimality for a given assignment.
\end{abstract}

\section{Introduction}

The ubiquitous nature of coalition formation has been triggering researchers from various fields to study the behavior of individuals forming groups~\cite{DG80,Gam61,KR14,KC82,SK98}. A large portion of the game-theoretical studies focuses on the individuals having ordinal preferences over the possible outcomes~\cite{BKS01,BJ02,CH04a,CH04b,PE15}.

Our setting involves $n$ players who need to be partitioned into coalitions. For convenience, we talk about assigning each player to a room. We assume the rooms to have no specific feature besides their capacity. To ensure a feasible partition in the outcome, we assume that the total capacity of rooms adds up to $n$, and a feasible assignment fills each room to its capacity. Having once entered the scheme, players have no option to opt out. Each player submits her preference list over the other players. Four features of the problem define various settings, each of which is realistic and will be investigated by us.

\begin{enumerate}[(i)]
	\item \label{it:ci} Preference lists might be \emph{complete} or \emph{incomplete}, this latter meaning that players have the right to declare some of the other players unacceptable as a roommate. No player can be put in the same room with an unacceptable roommate. 
	\item \label{it:sw} Players might have \emph{strictly} or \emph{weakly ordered} lists.
	\item \label{it:bw} Players might compare two coalitions based on their \emph{best} or \emph{worst} assigned roommate.
	\item \label{it:3i} The rooms might all accommodate 3 players, or they might have a predefined \emph{capacity} each, which we denote by $r_1, r_2, \dots, r_k$ for each of the $k$ rooms.
\end{enumerate}

The optimality principle we are focusing on is \emph{Pareto optimality}. A room assignment, or, a set of coalitions is Pareto optimal if there is no other assignment in which at least one player is better off, and no player is worse off than in the first assignment. The comparison here is defined based on point~(\ref{it:bw}) above. We shorten the term `Pareto optimal assignment' to \textsc{poa}. Our goal is to study all $2^4$ combinations of the above four features, and for each of them, determine the complexity of the following three problems:
\begin{enumerate}[(1)]
\item verifying whether a given feasible assignment is a \textsc{poa};
\item checking whether a \textsc{poa} exists;
\item finding a \textsc{poa}.
\end{enumerate}

\subsection{Related literature}

Pareto optimality in coalition formation has a rich literature. We start with an overview on coalition formation viewed as a hedonic game. Then we review the settings in which the coalition size matters.

\paragraph{Hedonic games.} Coalition formation under preferences can be seen as a hedonic game~\cite{ABS11,BKS01,BJ02}. In such a game, players have preferences over the possible coalitions they can be part of, and coalitions can be of any size---notice that these two basic features strikingly differ from our setting. Pareto optimal coalition formation as a hedonic game is extensively studied by Aziz et al.~\cite{ABH13}. They analyze two restricted variants of hedonic games that are closely related to our setting. Both of them operate under the assumption that the coalition size is arbitrary, but they both derive players' preferences on coalitions from a preference list on individual players. They show that if the preference lists are incomplete, and preferences are based on the best roommate, then verifying Pareto optimality for a given assignment is $\coNP$-complete, and computing a \textsc{poa} is $\NP$-hard. For complete lists and the same preferences, the grand coalition is a trivial optimal solution, since every player has their first-choice roommate in the sole room. Aziz et al.~\cite{ABH13} also show that if preferences are based on the worst roommate, then both verifying Pareto optimality for a given assignment, and computing a \textsc{poa} can be done in polynomial time.

\paragraph{2-person rooms.} Some of the literature concentrates on each coalition being of size 1 or~2. In this setting, a player can compare two coalitions simply based on the rank of the sole roommate (if any exists), so our point~(\ref{it:bw}) does not apply here. Using Morrill's algorithm~\cite{Mor10}, Aziz et al.~\cite{ABH13} show that even if preferences contain ties, both verifying Pareto optimality for a given matching, and calculating a \textsc{poa} are solvable in polynomial time. Their results are valid for complete and incomplete lists as well. Abraham and Manlove~\cite{AM04} consider Pareto optimal matchings as a means of coping with instances of the stable roommates problem with strict lists, which do not admit a stable matching. They show that while a maximum size \textsc{poa} is easy to find, finding a minimum size \textsc{poa} is $\NP$-hard.

\paragraph{3-person rooms.} For the setting in which a room can accommodate up to 3 players, two versions of the problem have been studied. A \emph{three-cyclic game} is a hedonic game in which the set of players is divided into men, women, and dogs and only kind of acceptable coalitions are man-woman-dog triplets~\cite{Knu76,NH91}. Furthermore, men only care about women, women only care about dogs and dogs only care about men. Computing a \textsc{poa} is known to be $\NP$-hard for these games, while the corresponding verification problem is $\coNP$-complete, even for strict preferences~\cite{ABH13}. In \emph{room-roommate games}, a set of players act as rooms, and these have no preferences whatsoever. The ordinary players, on the other hand, have a preference list over all possible roommate-room pairs they find acceptable. A triplet is feasible if exactly one player in it plays a room. If preferences are strict, then a \textsc{poa} can be computed in polynomial time, but the problem becomes $\NP$-hard as soon as ties are introduced, even if all lists are complete~\cite{ABH13}. Just as in the previous problem, the verification version is $\coNP$-complete, even for strict preferences~\cite{ABH13}.

\paragraph{Preferences depending on the room size.} Anonymous hedonic games~\cite{Bal04} are a subclass of hedonic games in which the players' preferences over coalitions only depend on coalition sizes. Both verification and finding a \textsc{poa} are hard in such games~\cite{ABH13}. Darmann~\cite{Dar18} studies a group activity selection model in which players have preferences not only on the activity, but also on the number of participants in their coalition. He provides an efficient algorithm to find a \textsc{poa}, if each player wants to share an activity with as many, or as few players as possible. 
\medskip

As we have seen, a number of papers investigate the complexity of finding a \textsc{poa} under various settings, such as limited coalition size or preferences over players rather than over coalitions. However, there is no work on the combination of these two. In our setting, homogeneous players \emph{rank each other}, and form coalitions of an arbitrary, but \emph{fixed} size. 

\subsection{Our contribution}

We tackle the problems of verifying Pareto optimality, deciding the existence of \textsc{poa}, and finding a \textsc{poa} for a set of fixed coalition sizes. We distinguish $2^4$ cases, based on the completeness and the orderedness of preferences, the way of comparing two coalitions by a player, and the room sizes. Our findings are summarized below and in Figure~\ref{fi:results}. 
 
\begin{itemize}
	\item Verification is $\coNP$-complete in all cases. We show this in Section~\ref{se:verification} by two reductions from triangle cover problems.
    \item If lists are incomplete, then deciding whether a feasible assignment exists is already $\NP$-complete in all cases. On the other hand, if a feasible assignment does exist, then an optimal one exists as well. These are due to a simple $\NP$-completeness reduction and a monotonicity argument, which can be found in Section~\ref{se:existence}.
    \item For complete lists, a \textsc{poa} is bound to exist. In 3 out of the 16 cases, serial dictatorship delivers one. In all other cases, by computing \emph{any} \textsc{poa} in polynomial time, one could answer an $\NP$-complete decision problem in polynomial time. In Section~\ref{se:finding} we elaborate on these easy and hard cases. For the positive results, we interpret serial dictatorship in the current problem settings. Then, we utilize a tool developed by Aziz et al.~\cite{ABH13} in the hardness proofs.
\end{itemize}


\begin{figure}
	\centering
\begin{tikzpicture}[grow=right,level distance=10em,edge from parent/.style = {draw, -latex}, every node/.style       = {font=\footnotesize}, sloped] 

  \tikzstyle{level 1}=[sibling distance=8em,level distance=15em] 
  \tikzstyle{level 2}=[sibling distance=8em,level distance=5em] 
  \tikzstyle{level 3}=[sibling distance=4em,level distance=7em] 
	\tikzstyle{level 4}=[sibling distance=2em,level distance=6em]

  \node [root] {Is there a \textsc{poa}?}
	child { node [treenode] {$\NP$-complete, Thm~\ref{th:3SI}}
	edge from parent node [below] {incomplete}}
	child { node [root] {Yes. How to find a \textsc{poa}?}
        child { node [dummy] {}
            child { node [treenode] {hard, Thm~\ref{th:3TCW}}
            edge from parent node [below, align=center]{worst}}
            child {  node [treenode] {hard, Thm~\ref{th:3TCB}}
            edge from parent node [above, align=center]{best}}
        edge from parent node [below] {ties} }
    child {node [dummy] {}
        child { node [treenode] {SD, Thm~\ref{th:3SCW}}
        edge from parent node [below, align=center]{worst}}
        child { node [dummy] {}
            child { node [treenode] {hard, Thm~\ref{th:iSCB}}
            edge from parent node [below, align=center]{$r_i$}}
            child { node [treenode] {SD, Thm~\ref{th:3SCB}}
            edge from parent node [above, align=center]{3}}
        edge from parent node [above, align=center]{best}}
        edge from parent node [above] {strict}}
    edge from parent node [above] {complete}};
\end{tikzpicture}
\caption{The complexity of finding a \textsc{poa}. The arrows explain the case distinctions according to points (\ref{it:ci})--(\ref{it:3i}) listed in the Introduction. SD stands for serial dictatorship.}
\label{fi:results}
\end{figure}
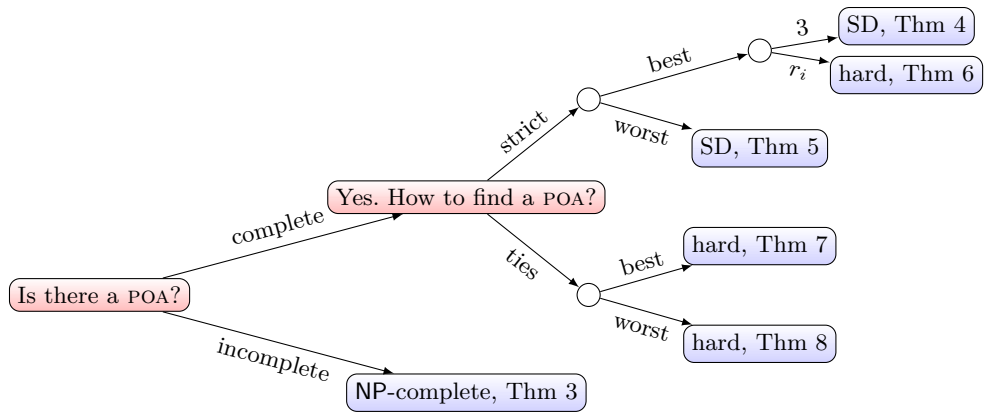

\section{Preliminaries}
\label{se:preliminaries}

In this section we set the solid mathematical basis for discussing our problems. Then, we introduce a selection of $\NP$-complete problems and prove that a specific variant  of them is hard as well. We do this to prepare the ground for our hardness proofs in Sections~\ref{se:verification}-\ref{se:finding}.

\subsection{Problem definition}

Our input consists of a set of players $P$, with $|P|=n$, a multiset $r_1, r_2, \ldots, r_k$ of room capacities with $r_1+r_2+\ldots +r_k = n$, and strictly ordered, but not necessary complete preferences for each player $i \in P$ on other players. An assignment $R$ is the partition of $P$ into sets $R_1, R_2, \ldots, R_k$. This means that players have no choice to opt out once they have entered the market. 

\begin{definition}
\label{def:assignment}
 An assignment $R=\left\{R_1, R_2, \ldots, R_k\right\}$ is called feasible if 
 \begin{enumerate}
     \item $|R_i| = r_i$ for each $1 \leq i \leq k$;
     \item each player in $R_i$ declares every other player in the same $R_i$ acceptable.
 \end{enumerate}
\end{definition}

Players evaluate their situation in an assignment solely based on the roommates they are grouped together with and do not care about the other rooms. For point~2 in Definition~\ref{def:assignment}, players only need to compare outcomes in which they find all other roommates acceptable. We define two comparison principles.

When the \emph{best} roommate counts, coalition $R_i$ is preferred to $R_i'$ by player $i$ if player $j \in R$ ranked highest by $i$ among all players in $R_i$ is preferred by $i$ to player $j' \in R$ ranked highest by $i$ among all players in $R_i'$. Analogously, when the \emph{worst} roommate counts, coalition $R_i$ is preferred to $R_i'$ by player $i$ if player $j \in R$ ranked lowest by $i$ among all players in $R_i$ is preferred by $i$ to player $j' \in R$ ranked lowest by $i$ among all players in $R_i'$. Our definitions are aligned with $\mathcal{W}$-preferences in~\cite{CH04b}, and with $B$-hedonic and $W$-hedonic games in~\cite{ABH13}. However, $\mathcal{B}$-preferences in~\cite{CH04b} are different, because there, the size of the coalition serves as a tiebreaker in case the best roommate is identical in the two coalitions. We consider two such coalitions equally good.

\subsection{Relevant hard problems}

We first introduce two variants of a hard graph cover problem that we will later use in our hardness reductions, and then shortly mention a bin packing problem as well.

\begin{pr}\textsc{ triangle cover} \\
	\inp A simple graph $G=(V,E)$.\\
	\ques Does there exist a partition of $V$ into sets of cardinality 3 so that for each set, the three vertices span a 3-cycle?
\end{pr}

\begin{pr}\textsc{directed triangle cover} \\
	\inp A simple directed graph $D=(V,A)$.\\
	\ques Does there exist a partition of $V$ into sets of cardinality 3 so that for each set, the three vertices span a directed 3-cycle?
\end{pr}

The \textsc{triangle cover} problem asks for a set of vertex-disjoint 3-cliques (triangles) in the input graph $G$, so that these cover all vertices of~$G$. This problem has been shown to be $\NP$-complete by Garey and Johnson~\cite{GJ79}. The directed version has not been proved to be $\NP$-complete, so we give a simple hardness proof below. 
The hard problem we reduce to \textsc{directed triangle cover} is \textsc{3d hypergraph matching}~\cite{Kar72,PS82}.

\begin{pr}\textsc{3d hypergraph matching} \\
	\inp A hypergraph $H=(U\cup V\cup W,E)$, where each $e \in E$ is a triple $(u,v,w)$ so that $u \in U$, $v \in v$, and $w \in W$.\\
	\ques Does there exist a perfect matching in~$H?$
\end{pr}

\begin{claim}
      \textsc{directed triangle cover} is an $\NP$-complete problem.
\end{claim}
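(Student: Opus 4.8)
The plan is to establish both membership in $\NP$ and $\NP$-hardness. Membership is immediate and I would dispatch it first: a partition of $V$ into triples is a polynomial-size certificate, and for each triple $\{a,b,c\}$ one checks in constant time whether one of the two cyclic orientations $a\to b\to c\to a$ or $a\to c\to b\to a$ is present. Hence \textsc{directed triangle cover} is in $\NP$.

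For hardness I would reduce from \textsc{3d hypergraph matching}. Given $H=(U\cup V\cup W,E)$, I build a digraph $D$ that keeps all vertices of $U\cup V\cup W$ and, for every hyperedge $e=(u,v,w)$, adds six \emph{private} vertices $u_e',u_e'',v_e',v_e'',w_e',w_e''$ together with the three directed triangles $u\to u_e'\to u_e''\to u$, $v\to v_e'\to v_e''\to v$, $w\to w_e'\to w_e''\to w$ and the two directed triangles $u_e'\to v_e'\to w_e'\to u_e'$ and $u_e''\to v_e''\to w_e''\to u_e''$. The construction is governed by two principles I would verify. First, no arc ever joins two original vertices, and the private vertices of a gadget are adjacent only to vertices of the same gadget; consequently every directed triangle of $D$ lies inside a single hyperedge gadget. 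Second, each gadget admits exactly two ways to be partitioned into directed triangles: a \emph{selected} mode $\{u,u_e',u_e''\},\{v,v_e',v_e''\},\{w,w_e',w_e''\}$, which consumes the three originals, and an \emph{unselected} mode $\{u_e',v_e',w_e'\},\{u_e'',v_e'',w_e''\}$, which covers only the private vertices. I would confirm by a short local case analysis that mixing is impossible: the only triangles through $v_e'$ are $\{v,v_e',v_e''\}$ and $\{u_e',v_e',w_e'\}$, so once $u$ is covered inside $e$ (forcing the use of $\{u,u_e',u_e''\}$) the vertex $v_e'$ must fall into $\{v,v_e',v_e''\}$, dragging $v$ and then $w$ into the gadget as well.

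Granting these two properties, the equivalence is routine. If $H$ admits a perfect matching $M$, I put every gadget of an edge of $M$ into selected mode and all other gadgets into unselected mode; because $M$ covers each original exactly once, this gives a partition of $V(D)$ into directed triangles. Conversely, any directed triangle cover of $D$ decomposes each gadget in an all-or-nothing fashion, and since every original vertex is covered exactly once, the selected gadgets correspond to a family of hyperedges covering each element of $U\cup V\cup W$ exactly once, i.e.\ a perfect matching of $H$. The reduction uses $3|U\cup V\cup W|$-many shared vertices plus $6|E|$ private ones, hence is polynomial.

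The main obstacle, and the reason the naive reduction fails, is the appearance of \emph{spurious} triangles. If one simply orients $U\to V\to W\to U$ along the hyperedges and identifies each hyperedge with the triangle on $u,v,w$, then three vertices $u,v,w$ may be pairwise joined by arcs coming from three \emph{different} hyperedges and thereby close a directed triangle corresponding to no hyperedge at all, breaking the reverse implication. The entire purpose of the private-vertex gadget is to forbid direct adjacencies among originals so that every triangle is confined to one gadget; arranging this containment while still permitting an element to be ``left free'' for the unique matching edge that covers it is the crux of the argument.
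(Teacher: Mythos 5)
Your proof is correct, and it follows the same high-level route as the paper: both reduce from \textsc{3d hypergraph matching} by replacing each hyperedge $(u,v,w)$ with a local gadget that admits exactly two triangle decompositions---a \emph{selected} mode that consumes $u,v,w$ together with the gadget's private vertices, and an \emph{unselected} mode that covers only the private vertices---so that selected gadgets correspond precisely to the edges of a perfect matching. The difference lies in the gadget itself: the paper introduces $9$ new vertices per hyperedge (two per original plus a central triangle $(a,b,c)$ whose presence in the cover signals selection), giving four triangles in selected mode and three in unselected mode, whereas your gadget needs only $6$ private vertices, arranged as two parallel directed $3$-cycles (the primed and double-primed layers), each original attached by one triangle straddling the two layers. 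Your version is more economical, and your write-up is tighter in two respects: you explicitly verify that no arc joins two originals and that private vertices of distinct gadgets are non-adjacent, so \emph{every} directed triangle of $D$ is confined to a single gadget and the two-mode case analysis is airtight (the paper leaves this confinement to inspection of its figure), and you dispose of membership in $\NP$ explicitly, which the paper's proof omits even though the claim asserts completeness rather than mere hardness. Both arguments then conclude with the same routine equivalence between perfect matchings and triangle covers.
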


\begin{proof}
We start with the input graph of an arbitrary instance of \textsc{3d hypergraph matching} and transform it into an instance of \textsc{directed triangle cover}. We keep the set of vertices, and simply replace each hyperedge $(u,v,w)$ of $H$ by the gadget shown in Figure~\ref{fi:directed}. This gadget introduces 9 new vertices per hyperedge to the set of vertices, which we will call the \emph{gadget vertices}. 
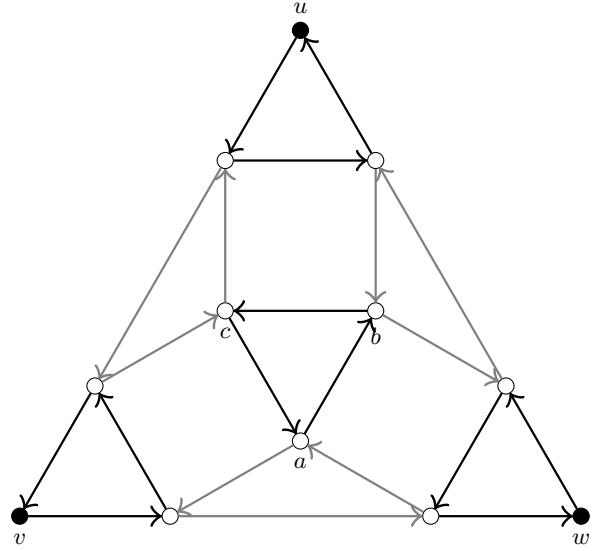
\begin{figure}
    \centering
\begin{tikzpicture}[scale=1, transform shape]
		\pgfmathsetmacro{\d}{2}
		\node[vertex, fill=black, label=below:$v$] (v) at (0, 0) {};
		\node[vertex, label=below:$$] (v1) at ($(v)+(\d,0)$) {};
		\node[vertex, label=below:$$] (v2) at ($(v1)+sqrt(3)*(\d,0)$) {};
		\node[vertex, fill=black, label=below:$w$] (w) at ($(v2)+(\d,0)$) {};
		
		\node[vertex, label=below:$$] (u2) at ($(v)+(0.5*\d, 0.866*\d)$) {};
		\node[vertex, label=below:$$] (u1) at ($2.732*(u2)$) {};
		\node[vertex, fill=black, label=above:$u$] (u) at ($3.732*(u2)$) {};
		
		\node[vertex, label=below:$$] (w2) at ($(u1)+(\d,0)$) {};
		\node[vertex, label=below:$$] (w1) at ($(u2)+(2.732*\d,0)$) {};
		
		\node[vertex, label=below:$c$] (c) at ($(u1)+(0,-\d)$) {};
		\node[vertex, label=below:$b$] (b) at ($(w2)+(0,-\d)$) {};
		\node[vertex, label=below:$a$] (a) at ($(v1)!0.5!(v2)+(0,0.5*\d)$) {};
		
		\draw [pointedline] (v) -- (v1);	
		\draw [pointedline] (v1) -- (u2);
		\draw [pointedline] (u2) -- (v);
		\draw [pointedline] (a) -- (b);
		\draw [pointedline] (b) -- (c);
		\draw [pointedline] (c) -- (a);
		\draw [pointedline] (u) -- (u1);
		\draw [pointedline] (u1) -- (w2);
		\draw [pointedline] (w2) -- (u);
		\draw [pointedline] (v2) -- (w);
		\draw [pointedline] (w) -- (w1);
		\draw [pointedline] (w1) -- (v2);
		
		\draw [pointedline,gray] (u1) -- (u2);
		\draw [pointedline,gray] (u2) -- (c);
		\draw [pointedline,gray] (c) -- (u1);
		\draw [pointedline,gray] (w1) -- (w2);
		\draw [pointedline,gray] (w2) -- (b);
		\draw [pointedline,gray] (b) -- (w1);
		\draw [pointedline,gray] (v1) -- (v2);
		\draw [pointedline,gray] (v2) -- (a);
		\draw [pointedline,gray] (a) -- (v1);
\end{tikzpicture}
    \label{fi:directed}
    \caption{The gadget replacing each hyperedge $(u,v,w) \in E(H)$. The unfilled vertices are gadget vertices, and these are not connected to any vertex outside of this gadget. Triangles in this gadget are either black or gray. The gray triangles cover all of these, and leave the remaining three vertices uncovered, while the black triangles cover all vertices in the gadget.}
\end{figure}

Assume first that a directed triangle cover exists in the resulting directed graph~$D$. For each gadget, two scenarios can occur. Either vertices $a,b,c$ form a triangle and then all of $u, v$, and $w$ must be covered by triangles inside the same gadget, or triangle $(a,b,c)$ is not in the cover, but then all of $u, v$, and $w$ are covered by triangles outside the gadget. The first case corresponds to the hyperedge $(u,v,w)$ being in our perfect matching $M$, while the second case tells otherwise. We still need to show that $M$ is indeed a perfect matching. Firstly, no vertex can appear in two hyperedges in $M$, because it would cover the same vertex twice by directed triangles. Secondly, if a vertex is left unmatched, then it is also left uncovered by the triangle cover, which is a contradiction.

To show the opposite direction, we first assume that a perfect matching $M$ exists in the original instance. We transform it into directed triangles in the created directed graph $D$ as above: the edges of $M$ will be the gadget with $(a,b,c)$ (the black triangles), the rest of the edges will be the gadgets without $(a,b,c)$ (the gray triangles). We need to show that these triangles form a cover. The gadget vertices are trivially covered exactly once. The rest of the vertices are matched in $M$ along some hyperedge, and the gadget belonging to this hyperedge covers them with a triangle inside the gadget.
\qed \end{proof}

\begin{remark}
\label{re:triangle}
    It is an obvious observation that both \textsc{triangle cover} and \textsc{directed triangle cover} are hard only in graphs for which $|V|$ is a multiple of~3. We will thus assume that $G$ in the input of these hard problems has this property. Besides this, we can also assume that $G$ is simple, and that each vertex is of degree at least~2, because graphs with an isolated or a degree~1 vertex are trivial no-instances. For the directed version, one can even assume that each vertex has at least one outgoing and at least one incoming edge.
\end{remark}

\medskip

We close this section with the final element in our toolbox of hard problems. The input of \textsc{unary bin packing} is a set of item sizes, and a bin size, all encoded in unary. The goal is to group all items into bins so that the total item size in each bin is exactly the bin size. This problem has been shown to be $\NP$-complete by Garey and Johnson~\cite{GJ79}.

\begin{pr}\textsc{unary bin packing} \\
	\inp A set of item sizes $i_1, i_2, \ldots, i_n$, and a bin size $b$, all encoded in unary.\\
	\ques Does there exist a partitioning of $i_1, i_2, \ldots, i_n$ so that the sum of item sizes in each set of the partition adds up to~$b$?
\end{pr}

\begin{remark}
For our proofs, we will assume that the smallest item size is at least~2. The hardness of this variant is easy to see. If we take an input of \textsc{unary bin packing} and multiply all item and bin sizes by 2, then we get an equally hard problem that can be encoded in twice as many bits as the original one.
\label{re:bin}
\end{remark}

\section{Verification}
\label{se:verification}

In this section, we show the hardness of verification for all cases. We present two proofs: in Theorem~\ref{th:verW}, the worst roommate defines the base of comparison for two coalitions, while it is the best roommate who counts in Theorem~\ref{th:verB}. Other than this, we restrict our reduction to the least general case of the problem, having strict and complete lists, and 3-person rooms.

\begin{theorem}[Verification, strict and complete lists, worst roommate counts, 3-person rooms]
\label{th:verW}
Supposing that the preferences of a player depend on the worst roommate, verifying whether a given assignment is Pareto optimal is a $\coNP$-complete task even if all preferences are strict and complete, and every room is of size~3.
\end{theorem}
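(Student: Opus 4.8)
The plan is first to place the problem in $\coNP$: a witness that a given feasible assignment $R$ is \emph{not} a \textsc{poa} is simply another feasible assignment $R'$, and checking in polynomial time that $R'$ makes nobody worse off and at least one player strictly better off (comparing worst roommates) certifies non-optimality. For hardness I would reduce from \textsc{triangle cover}, which is $\NP$-complete and, by Remark~\ref{re:triangle}, may be assumed to be given on a simple graph $G=(V,E)$ with $|V|$ a multiple of $3$ and minimum degree at least $2$. The aim is to build in polynomial time a player set with size-$3$ rooms, strict and complete preferences, and one feasible assignment $R$, so that $R$ fails to be a \textsc{poa} if and only if $G$ admits a triangle cover. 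Since non-optimality is the $\NP$-side of verification, this yields the claimed $\coNP$-hardness.

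My construction would turn each vertex into a player and let every player $v$ rank its graph-neighbours strictly above all of its non-neighbours (breaking ties inside each block arbitrarily), so that under the worst-roommate rule the value of $v$ in a room is a neighbour exactly when \emph{both} of its roommates are neighbours. Hence a room is good for all three of its members precisely when the three players are pairwise adjacent, i.e.\ form a triangle of $G$. The starting assignment $R$ should be calibrated so that every player currently has a non-neighbour as its worst roommate, making $R$ strictly improvable for everyone once the players can be regrouped into triangles. The easy direction is then immediate: given a triangle cover of $G$, the corresponding room assignment $R'$ replaces each player's non-neighbour worst roommate by a neighbour, so every player is strictly better off and none is worse off; thus $R$ is dominated and is not a \textsc{poa}.

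The delicate direction, and the step I expect to be the main obstacle, is the converse: from a dominating assignment $R'$ I must recover a genuine triangle cover. The risk is a ``partial'' improvement, in which some players stay at their old value by keeping a single non-neighbour roommate instead of being moved into a triangle, so that $R'$ need not partition $V$ into triangles at all. My approach is to calibrate $R$ so that each player's worst roommate sits at the very top of its non-neighbour block; then making $v$ strictly better requires two neighbours, and the only way to keep $v$ at its old value is to leave it with one specific non-neighbour. The crux is to block these equality-escapes globally --- for instance by introducing auxiliary ``buffer'' players that occupy the threshold slots and cannot themselves be rearranged without making some player worse off --- so that ``nobody worse off, somebody better off'' collapses to ``every room of $R'$ is a triangle.'' Designing such buffers so that they simultaneously fix all thresholds and remain non-worsenable in every candidate $R'$ (noting that a single fixed partition cannot already place everyone with their best non-neighbour without trivialising the instance) is the technical heart of the reduction, and is exactly the part I would expect to demand the most care.
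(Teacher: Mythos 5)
Your $\coNP$ membership argument and your choice of source problem (\textsc{triangle cover}) match the paper, and you have correctly identified the central difficulty: under the worst-roommate rule with strict preferences, a Pareto improvement need not make every player strictly better off, so a dominating assignment $R'$ may leave some players with exactly their old worst roommate and therefore need not induce a triangle cover. However, your proposal does not resolve this difficulty --- it only names it. The ``buffer players that occupy the threshold slots and cannot themselves be rearranged'' are never constructed, and no argument is given that such a gadget exists; you explicitly defer it as ``the technical heart of the reduction.'' Since the entire content of the hardness proof lies in that gadget, what you have is a proof plan with the decisive step missing. There is also a secondary flaw in the calibration you do describe: in any single-copy construction, some player will typically have \emph{two} non-neighbours as roommates in the starting assignment $R$ (indeed this is unavoidable when $G$ is sparse, e.g.\ a long cycle), and then her worst roommate cannot sit ``at the very top of its non-neighbour block,'' so even the threshold property you want to enforce is not achievable as stated.

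For comparison, the paper closes exactly this gap with a concrete three-copy construction rather than buffers attached to a single copy. It takes $V_1=V(G)$ plus two copies $V_2,V_3$, and lets the starting assignment $R$ consist of the ``gray'' triangles $\{v_1,v_2,v_3\}$ formed by each vertex and its two copies. Preferences are arranged so that (i) the original edges of $G$ (inside $V_1$) and a fixed family of interlocking ``blue'' triangles covering $V_2\cup V_3$ are ranked above the gray edges, and (ii) every other pair is ranked \emph{below} the gray edges. Point (ii) kills all equality-escapes through unintended pairs, and the interlocking pattern of the blue triangles creates an all-or-nothing cascade: if any gray triangle is broken, the displaced copies in $V_2\cup V_3$ can only be rescued by blue triangles, which break further gray triangles, forcing \emph{all} blue triangles into $R'$; this frees exactly the set $V_1$, whose members can then only be grouped along original edges of $G$, i.e., a Pareto improvement exists if and only if $G$ has a triangle cover. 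Some such global rigidity mechanism is indispensable for the reverse direction, and supplying it --- not the reduction skeleton --- is what your proposal still owes.
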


\begin{proof}
To each instance $G$ of \textsc{triangle cover} we will construct an instance of our verification problem, consisting of players with strict and complete preferences, and an assignment on which Pareto optimality is to be verified. We will show that a triangle cover exists in the first instance if and only if the assignment has a Pareto improvement.

\begin{figure}[ht]
\centering
\begin{tikzpicture}[scale=1,yscale=-1,rotate=90]
\pgfmathsetmacro{\b}{240}
\pgfmathsetmacro{\d}{120}
\pgfmathsetmacro{\a}{20}
\pgfmathsetmacro{\c}{160}

\draw [draw=black] (0.8,10) rectangle (0,0);
\draw [draw=black] (-3,10) rectangle (-3.8,0);
\draw [draw=black] (4.6,10) rectangle (3.8,0);

\draw[color=gray,dashed, ultra thick] (0.4,9.5) --  (4.2,9.5);
\draw[color=gray,dashed, ultra thick] (0.4,9.5) -- node[edgelabel,  near end]   {$3$}  (-3.4,9.5);
\draw [color=gray, dashed, ultra thick] (4.2,9.5) to[out=\c,in=\a] node[edgelabel,very  near end]   {$4$} (-3.4,9.5);
\draw[color=gray,dashed, ultra thick] (0.4,8) -- node[edgelabel,  near start]   {$3$} node[edgelabel, near end]   {$4$} (4.2,8);
\draw[color=gray,dashed, ultra thick] (0.4,8) -- node[edgelabel,  near start]   {$4$} node[edgelabel,very  near end]   {$4$}(-3.4,8);
\draw [color=gray, dashed, ultra thick] (4.2,8) to[out=\c,in=\a] node[edgelabel,very  near end] {$5$} node[edgelabel,very  near start]   {$3$}  (-3.4,8);
\draw[color=gray,dashed, ultra thick] (0.4,6.5) -- (4.2,6.5);
\draw[color=gray,dashed, ultra thick] (0.4,6.5) -- (-3.4,6.5);
\draw [color=gray, dashed, ultra thick] (4.2,6.5) to[out=\c,in=\a] (-3.4,6.5);
\draw[color=gray,dashed, ultra thick] (0.4,5) -- (4.2,5);
\draw[color=gray,dashed, ultra thick] (0.4,5) -- (-3.4,5);
\draw [color=gray, dashed, ultra thick] (4.2,5) to[out=\c,in=\a] (-3.4,5);

\draw[color=blue,dotted, ultra thick] (0.4,9.5) --node[edgelabel,  near end]   {1} (0.4,8);
\draw[color=blue,dotted, ultra thick] (4.2,9.5) -- node[edgelabel,  near end]   {1} node[edgelabel,near start]   {$2$}(4.2,8);
\draw[color=blue,dotted, ultra thick] (0.4,9.5) -- (4.2,6.5);
\draw[color=blue,dotted, ultra thick] (0.4,8) -- node[edgelabel,  near start]   {2}(4.2,6.5);
\draw[color=blue,dotted, ultra thick] (0.4,6.5) -- (4.2,5);
\draw[color=blue,dotted, ultra thick] (0.4,6.5) -- (4.2,3.5);
\draw[color=blue,dotted, ultra thick] (4.2,3.5) -- (4.2,5);
\draw[color=blue,dotted, ultra thick] (0.4,5) -- (0.4,3.5);
\draw[color=blue,dotted, ultra thick] (0.4,5) -- (2,3.4);
\draw[color=blue,dotted, ultra thick] (0.4,3.5) -- (1.8,2.8);
\draw[color=blue,dotted, ultra thick] (4.2,9.5) --  (2,10.5);
\draw[color=blue,dotted, ultra thick] (4.2,8) --  node[edgelabel,  near start]   {2}  (1.8,10);

\draw [color=black,thick] (-3.4,9.5)   to[out=\b,in=\d]  node[edgelabel, very near end]   {1} (-3.4,6.5);
\draw [color=black,thick] (-3.4,8) to[out=\b+10,in=\d]   (-3.4,5);
\draw [color=black,thick] (-3.4,8) to[out=\b,in=\d]   (-3.4,3.5);
\draw [color=black,thick] (-3.4,6.5) to[out=\b,in=\d]  node[edgelabel,  near start]   {2}   (-3.4,5);
\draw [color=black,thick] (-3.4,3.5) to[out=\b,in=\d]   (-3.5,2.7);
\draw [color=black,thick] (-3.4,0.5) to[out=\d,in=\b]   (-3.4,2.5);
\draw [color=black,thick] (-3.4,0.5) to[out=\d,in=\b]   (-3.4,1);
\draw [color=black,thick] (-3.4,9.5) to[out=\b,in=\d] node[edgelabel,  near start]   {12}  (-3.4,8);

\draw[fill=black] (0.4,9.5) circle (2pt);
\draw[fill=black] (0.4,8) circle (2pt);
\draw[fill=black] (0.4,6.5) circle (2pt);
\draw[fill=black] (0.4,5) circle (2pt);
\draw[fill=black] (0.4,3.5) circle (2pt);
\draw[fill=gray] (0.4,1.9) circle (1pt);
\draw[fill=gray] (0.4,1.6) circle (1pt);
\draw[fill=gray] (0.4,1.3) circle (1pt);
\draw[fill=black] (0.4,0.5) circle (2pt);

\draw[fill=black] (4.2,9.5) circle (2pt);
\draw[fill=black] (4.2,8) circle (2pt);
\draw[fill=black] (4.2,6.5) circle (2pt);
\draw[fill=black] (4.2,5) circle (2pt);
\draw[fill=black] (4.2,3.5) circle (2pt);
\draw[fill=gray] (4.2,1.9) circle (1pt);
\draw[fill=gray] (4.2,1.6) circle (1pt);
\draw[fill=gray] (4.2,1.3) circle (1pt);
\draw[fill=black] (4.2,0.5) circle (2pt);

\draw[fill=black] (-3.4,9.5) circle (2pt);
\draw[fill=black] (-3.4,8) circle (2pt);
\draw[fill=black] (-3.4,6.5) circle (2pt);
\draw[fill=black] (-3.4,5) circle (2pt);
\draw[fill=black] (-3.4,3.5) circle (2pt);
\draw[fill=gray] (-3.4,1.9) circle (1pt);
\draw[fill=gray] (-3.4,1.6) circle (1pt);
\draw[fill=gray] (-3.4,1.3) circle (1pt);
\draw[fill=black] (-3.4,0.5) circle (2pt);

\node at (0.4,-0.5) {$V_2$};
\node at (-3.4,-0.5) {$V_1$};
\node at (4.2,-0.5) {$V_3$};
\node at (-3.4,10.5) {$G$};
\end{tikzpicture}
\caption{The assignment instance constructed to~$G$ in the proof of Theorem~\ref{th:verW}. The numbers on the edges mark the preferences of the players. Players in $V_1$ rank other players in $V_1$ higher than players in $V_2$ and~$V_3$, as the numbers on their solid black and dashed gray edges indicate. Players in $V_2$ and $V_3$ prefer their dotted blue edges to their dashed gray edges. The dotted blue edges form a triangle cover of $V_2 \cup V_3$---they connect the leftmost two vertices in $V_3$ with the rightmost vertex in~$V_2$.}
\label{fi:verW}
\end{figure}
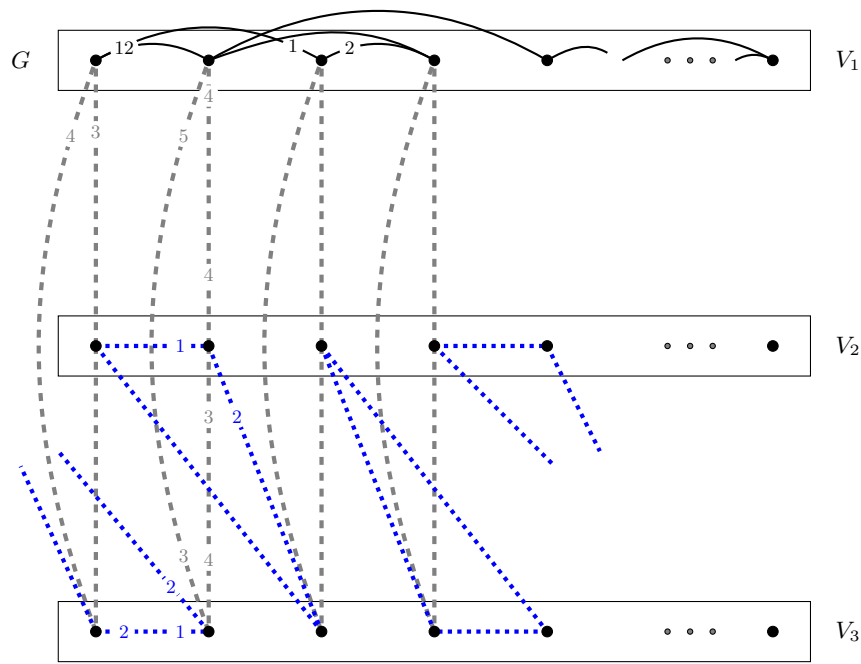

First we draw graph $G$ and also make two further copies of its vertex set~$V_1$. We denote these copies by $V_2$ and~$V_3$. Each vertex in $V_1 \cup V_2 \cup V_3$ represents a player in our \textsc{poa} instance. 
To show hardness in the most general case, we construct our \textsc{poa} instance with complete preference lists, which translates into a complete graph on the vertex set $V_1 \cup V_2 \cup V_3$. The edges of this graph can be partitioned into four classes.
\begin{itemize}
	\item The original edges of $G$ are solid and black in Figure~\ref{fi:verW}. These edges are the best choices of both of their end vertices, and the order among them can be chosen arbitrarily. Recall that each vertex in $V_1$ has at least two of these black edges, as argued in Remark~\ref{re:triangle} in Section~\ref{se:preliminaries}.
	\item The copied vertices $V_2$ and $V_3$ are connected by dotted blue triangles, as shown in Figure~\ref{fi:verW}. Each such triangle connects three vertices that originate from three different vertices of~$G$. Notice that these triangles can cover the entire set $V_2 \cup V_3$, because both $|V_2|$ and $|V_3|$ are multiples of~3, as noted in Remark~\ref{re:triangle}. The rank of these edges is the highest possible, and their order among themselves does not matter. 
	\item The three copies of the same vertex in $G$ are connected by a dashed gray triangle in Figure~\ref{fi:verW}. These edges are ranked lower than the solid black and the dotted blue edges. The order among them does not matter.
	\item All edges not visible in Figure~\ref{fi:verW} are ranked lower than the listed edges, in an arbitrary order.
\end{itemize}

The verification will happen with respect to the assignment $R$ built by all dashed gray edges.

If $G$ has a triangle cover, then these solid black triangles cover the entire set~$V_1$. In our assignment problem, these triangles translate into coalitions of size~3. The dotted blue triangles on $V_2$ and $V_3$ complete the alternative assignment. It is easy to see that every agent is better off by switching to the solid black and the dotted blue edges, since they are always ranked higher than the dashed gray edges. Thus, $R$ is not a \textsc{poa}.

Now we show the opposite direction. If there is a Pareto improvement to the assignment marked by the dashed gray edges, then it may use none of the invisible edges, since they are all worse than the dashed gray ones, and no player is allowed to receive a worse partner in the improved assignment. So we need to find an alternative assignment using only the first 3 types of edges. It is clear that by breaking any dashed gray coalition, at least one agent in each vertex group should be reassigned to a different coalition. Now, players in $V_2 \cup V_3$ can only choose a dotted blue edge instead of a dashed gray one. The sparsity of edges gives us that if there exists one dotted blue edge in a triangle, then not only the whole triangle is dotted blue, but we also must have all of the dotted blue triangles in the new assignment---otherwise players remain unassigned. Thus, in the new assignment, we must have all of the dotted blue triangles, and $V_1$'s vertices are free to be grouped up among themselves. This they can only do using the solid black edges, which correspond to the original edges in~$G$. Thus if we have a Pareto improvement of the dashed gray assignment, then we can cover $V_1$ with disjoint, solid black triangles. It means that $G$ has a triangle cover.
\qed \end{proof}

\begin{theorem}[Verification, strict and complete lists, best roommate counts, 3-person rooms]
\label{th:verB}
Supposing that the preferences of a player depend on the best roommate, verifying whether a given assignment is Pareto optimal is a $\coNP$-complete task even if all preferences are strict and complete, and every room is of size~3.
\end{theorem}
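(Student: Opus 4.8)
The plan is to mirror the structure of Theorem~\ref{th:verW}, but to reduce from \textsc{directed triangle cover} rather than from its undirected version; the directedness is exactly what the best-roommate rule needs in order to pin down a unique ``good'' partner for each player. Membership in \coNP\ is immediate: a Pareto improvement of the given assignment is a polynomially checkable certificate that the assignment is not a \textsc{poa}, so the complementary decision problem lies in \NP. For hardness, to each instance $D=(V,A)$ of \textsc{directed triangle cover} (with $|V|$ a multiple of $3$ and every vertex having an incoming and an outgoing arc, as permitted by Remark~\ref{re:triangle}) I would build an assignment on $3|V|$ players with strict, complete lists and rooms of size $3$, and argue that it admits a Pareto improvement if and only if $D$ has a directed triangle cover.

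The construction uses three copies $V_1,V_2,V_3$ of the vertex set together with cyclic preferences, so that the best-roommate rule behaves like the caring relation of a three-cyclic game. Every player in $V_1$ ranks all of $V_2$ above all of $V_3$, and $V_3$ above the remaining members of $V_1$; players in $V_2$ and $V_3$ do the same with the groups shifted cyclically. Inside the preferred group, the copies of the out-neighbours of a vertex are placed at the very top (these play the role of the solid black edges), and a single designated \emph{threshold} copy is placed just below them (the analogue of the dashed gray edges), with everything else ranked lower. The assignment $R$ to be verified puts every player together with its threshold partner and with that partner's threshold partner, so that $R$ consists of cyclic triples on which each player attains exactly its threshold. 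For the forward direction, a directed triangle cover yields, for each directed $3$-cycle $u\to v\to w\to u$, the rooms $\{u_1,v_2,w_3\}$, $\{v_1,w_2,u_3\}$ and $\{w_1,u_2,v_3\}$; in each such room every player sees a copy of one of its own out-neighbours as best roommate, so everybody strictly improves and $R$ is not a \textsc{poa}.

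For the converse I would first exploit the cyclic preferences to show that, in any Pareto improvement, every room must contain exactly one player from each of $V_1,V_2,V_3$: since a player's threshold partner lies in the next group, a player lacking a next-group roommate would be strictly worse off, and a short case analysis on the group composition $(n_1,n_2,n_3)$ of a room rules out everything except $(1,1,1)$. Each room is therefore a triple $\{x_1,y_2,z_3\}$ in which, reading cyclically, each vertex either dominates the next through a genuine arc of $D$ or meets exactly its threshold partner. The main obstacle is to rule out \emph{partial} improvements, in which a few rooms are rebuilt from genuine directed $3$-cycles while the rest stay at threshold: without extra care this would only certify the existence of \emph{some} directed $3$-cycle, a polynomial-time question, rather than of a full cover. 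I expect this to be the crux of the proof, and I would handle it in the same spirit in which the blue triangles handle $V_2\cup V_3$ in Theorem~\ref{th:verW}: by interlocking the threshold partners so that breaking any single room to insert a real arc displaces a neighbouring player from its threshold, forcing the improvement to propagate around the whole instance and hence to realise a directed triangle cover of all of $V$.
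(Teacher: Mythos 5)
Your choice of reduction source (\textsc{directed triangle cover}) matches the paper, the $\coNP$ membership argument is fine, and your transversal analysis (every room in an improvement must have composition $(1,1,1)$) is sound. But the proof is incomplete exactly where you flag it yourself: ruling out partial improvements. In your construction as stated, a single directed $3$-cycle $u\to v\to w\to u$ in $D$ already yields a Pareto improvement (rebuild the three rooms on the nine copies of $u,v,w$ and keep every other room at threshold), so the reduction would only decide the existence of \emph{one} directed triangle, a polynomial-time question. Moreover, the repair you gesture at---interlocking the threshold partners across different original vertices---does not rescue this particular construction. The reason is that your genuine arcs and your threshold arcs both run between consecutive groups ($V_1\to V_2\to V_3\to V_1$), so nothing excludes \emph{mixed} rooms: for instance $x_1$ takes $y_2$ along a genuine arc, $y_2$ takes its threshold partner $z_3$, and $z_3$ takes $x_1$ along a genuine arc; every member of such a room is at or above its threshold, yet the room uses both kinds of arcs. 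Hence a Pareto improvement only yields a partition into transversal triples whose cyclic adjacencies are ``genuine arc or threshold arc''---a triangle cover of an auxiliary digraph, not of $D$---and the claimed equivalence with \textsc{directed triangle cover} breaks. (A further degenerate issue: once thresholds are given by a permutation $\sigma$, the threshold copy $\sigma(x)_2$ may itself be a copy of an out-neighbour of $x$, so the ``threshold'' position in $x_1$'s list is no longer well defined and $R$ is no longer uniformly at threshold.)

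The paper's proof of Theorem~\ref{th:verB} avoids both problems by a different separation of roles: the arcs of $D$ stay \emph{inside} $V_1$ (top-ranked by their tail, ranked below the gray edges by their head), while the copies serve purely as interlocking machinery. The three copies of each vertex form a gray threshold triangle with cyclic preferences (this is the assignment $R$ to verify), and $V_2\cup V_3$ additionally carries a fixed, interlocked family of blue triangles with cyclic rank-$1$/rank-$4$ preferences, each mixing copies of three \emph{different} vertices. With these lists, every room of a Pareto improvement is forced to be monochromatic---entirely gray, entirely blue, or a directed black triangle of $D$---because inserting any foreign player drags some copy strictly below its threshold. Then breaking even one gray triangle pushes its copies into blue triangles, and the interlocked blue family propagates this around the whole instance, breaking every gray triangle; consequently all of $V_1$ must be covered by directed black triangles, i.e., $D$ has a directed triangle cover. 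Decoupling ``where the arcs live'' ($V_1$) from ``where the interlocking lives'' ($V_2\cup V_3$) is precisely the missing idea that excludes the mixed rooms your construction admits.
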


\begin{proof}
This proof follows the lines of the previous one, but it reduces our problem to \textsc{directed triangle cover}. To each instance $D$ of \textsc{directed triangle cover} we will construct an instance of our verification problem, consisting of players with preferences, and an assignment on which Pareto optimality is to be verified. We will show that a directed triangle cover exists in the first instance if and only if the assignment has a Pareto improvement.
\begin{figure}[ht]
\centering
\begin{tikzpicture}[scale=1,yscale=-1,rotate=90]
\pgfmathsetmacro{\b}{240}
\pgfmathsetmacro{\d}{120}
\pgfmathsetmacro{\a}{20}
\pgfmathsetmacro{\c}{160}

\draw [draw=black] (0.8,10) rectangle (0,0);
\draw [draw=black] (-3,10) rectangle (-3.8,0);
\draw [draw=black] (4.6,10) rectangle (3.8,0);

\draw[color=gray,dashed, ultra thick] (0.4,9.5) -- (4.2,9.5);
\draw[color=gray,dashed, ultra thick] (0.4,9.5) -- (-3.4,9.5);
\draw [color=gray, dashed, ultra thick] (4.2,9.5) to[out=\c,in=\a] (-3.4,9.5);
\draw[color=gray,dashed, ultra thick] (0.4,8) -- node[edgelabel,  near start]   {$2$} node[edgelabel, near end]   {$3$} (4.2,8);
\draw[color=gray,dashed, ultra thick] (0.4,8) -- node[edgelabel,  near start]   {$3$} node[edgelabel,very  near end]   {$2$}(-3.4,8);
\draw [color=gray, dashed, ultra thick] (4.2,8) to[out=\c,in=\a] node[edgelabel,very  near start]   {$2$}  node[edgelabel,very  near end]   {$3$}  (-3.4,8);
\draw[color=gray,dashed, ultra thick] (0.4,6.5) -- (4.2,6.5);
\draw[color=gray,dashed, ultra thick] (0.4,6.5) -- (-3.4,6.5);
\draw [color=gray, dashed, ultra thick] (4.2,6.5) to[out=\c,in=\a] (-3.4,6.5);
\draw[color=gray,dashed, ultra thick] (0.4,5) -- (4.2,5);
\draw[color=gray,dashed, ultra thick] (0.4,5) -- (-3.4,5);
\draw [color=gray, dashed, ultra thick] (4.2,5) to[out=\c,in=\a] (-3.4,5);

\draw[color=blue,dotted, ultra thick] (0.4,9.5) -- node[edgelabel,  near end]   {4} node[edgelabel,near start]{1} (0.4,8);
\draw[color=blue,dotted, ultra thick] (4.2,9.5) -- node[edgelabel,  near end]   {1} node[edgelabel,near start]   {4}(4.2,8);
\draw[color=blue,dotted, ultra thick] (0.4,9.5) -- node[edgelabel,  near end]   {1} node[edgelabel,very near start]{4}  (4.2,6.5);
\draw[color=blue,dotted, ultra thick] (0.4,8) --   node[edgelabel,  near end]   {4} node[edgelabel,near start]{1}(4.2,6.5);
\draw[color=blue,dotted, ultra thick] (0.4,6.5) -- (4.2,5);
\draw[color=blue,dotted, ultra thick] (0.4,6.5) -- (4.2,3.5);
\draw[color=blue,dotted, ultra thick] (4.2,3.5) -- (4.2,5);
\draw[color=blue,dotted, ultra thick] (0.4,5) -- (0.4,3.5);
\draw[color=blue,dotted, ultra thick] (0.4,5) -- (2,3.4);
\draw[color=blue,dotted, ultra thick] (0.4,3.5) -- (1.8,2.8);
\draw[color=blue,dotted, ultra thick] (4.2,9.5) -- (2,10.5);
\draw[color=blue,dotted, ultra thick] (4.2,8) -- node[edgelabel,  near start]   {4}  (1.8,10);

\begin{scope}[very thick,decoration={
    markings,
    mark=at position 0.5 with {\arrow{>}}}
    ] 
\draw [postaction={decorate},color=black,thick] (-3.4,9.5) to[out=\b,in=\d]  node[edgelabel, near end]   {$5$}   (-3.4,8);
\draw [postaction={decorate},color=black,thick] (-3.4,8) to[out=\b+10,in=\d]  node[edgelabel,very  near start]   {$1$}   (-3.4,5);
\draw [postaction={decorate},color=black,thick] (-3.4,3.5) to[out=\b,in=\d]  (-3.5,2.7);
\draw [postaction={decorate},color=black,thick] (-3.4,0.5) to[out=\d,in=\b]   (-3.4,2.5);
\draw [postaction={decorate},color=black,thick] (-3.4,0.5) to[out=\d,in=\b]   (-3.4,1);
\end{scope}
\begin{scope}[very thick,decoration={
    markings,
    mark=at position 0.5 with {\arrow{<}}}
    ]
\draw [postaction={decorate},color=black,thick] (-3.4,9.5) to[out=\b,in=\d]   (-3.4,6.5);
\draw [postaction={decorate},color=black,thick] (-3.4,8) to[out=\b-10,in=\d+10]  node[edgelabel, near start]   {$4$} node[edgelabel, near end]   {1}  (-3.4,3.5);
\draw [postaction={decorate},color=black,thick] (-3.4,6.5) to[out=\b,in=\d]   (-3.4,5);
\end{scope}

\draw[fill=black] (0.4,9.5) circle (2pt);
\draw[fill=black] (0.4,8) circle (2pt);
\draw[fill=black] (0.4,6.5) circle (2pt);
\draw[fill=black] (0.4,5) circle (2pt);
\draw[fill=black] (0.4,3.5) circle (2pt);
\draw[fill=gray] (0.4,1.9) circle (1pt);
\draw[fill=gray] (0.4,1.6) circle (1pt);
\draw[fill=gray] (0.4,1.3) circle (1pt);
\draw[fill=black] (0.4,0.5) circle (2pt);

\draw[fill=black] (4.2,9.5) circle (2pt);
\draw[fill=black] (4.2,8) circle (2pt);
\draw[fill=black] (4.2,6.5) circle (2pt);
\draw[fill=black] (4.2,5) circle (2pt);
\draw[fill=black] (4.2,3.5) circle (2pt);
\draw[fill=gray] (4.2,1.9) circle (1pt);
\draw[fill=gray] (4.2,1.6) circle (1pt);
\draw[fill=gray] (4.2,1.3) circle (1pt);
\draw[fill=black] (4.2,0.5) circle (2pt);

\draw[fill=black] (-3.4,9.5) circle (2pt);
\draw[fill=black] (-3.4,8) circle (2pt);
\draw[fill=black] (-3.4,6.5) circle (2pt);
\draw[fill=black] (-3.4,5) circle (2pt);
\draw[fill=black] (-3.4,3.5) circle (2pt);
\draw[fill=gray] (-3.4,1.9) circle (1pt);
\draw[fill=gray] (-3.4,1.6) circle (1pt);
\draw[fill=gray] (-3.4,1.3) circle (1pt);
\draw[fill=black] (-3.4,0.5) circle (2pt);

\node at (0.4,-0.5) {$V_2$};
\node at (-3.4,-0.5) {$V_1$};
\node at (4.2,-0.5) {$V_3$};
\node at (-3.4,10.5) {$D$};
\end{tikzpicture}
\caption{The assignment instance constructed to the directed graph~$D$ in the proof of Theorem~\ref{th:verB}. The numbers on the edges mark the preferences of the players. Players in $V_1$ rank their outgoing edges first, then their edges to players in $V_2$ and~$V_3$, and then their incoming edges, as the numbers on their solid black and dashed gray edges indicate. Preferences in the dashed gray and dotted blue triangles are cyclic. Players in $V_2$ and $V_3$ rank their two dashed gray edges sandwiched between their two dotted blue edges.}
\label{fi:verB}
\end{figure}
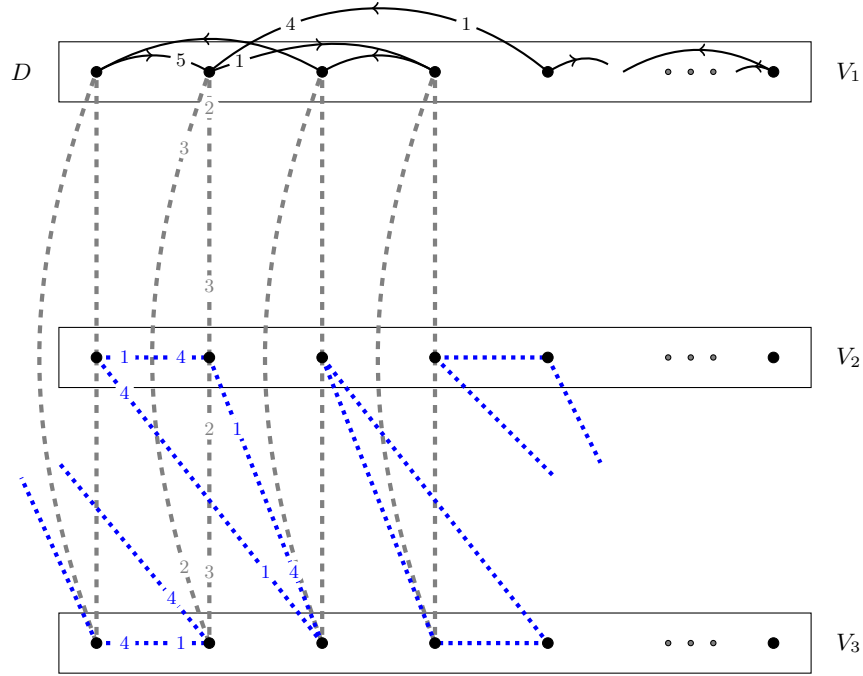

First we draw the directed graph $D$ and make two further copies of its vertex set~$V_1$. We denote these copies by $V_2$ and~$V_3$. Just as in our previous proof, the edges of this complete graph can be partitioned into four classes. Notice that the preferences differ from the ones in our previous proof.
\begin{itemize}
	\item The original directed edges of $D$ are solid and black in Figure~\ref{fi:verB}. These edges are the best choices of their starting vertex and they are ranked below the dashed gray edges at their end vertex. The order among all outgoing and among all incoming directed edges of the same vertex can be chosen arbitrarily. Recall that each vertex in $V_1$ has at least one outgoing and at least one incoming edge, as stated in Remark~\ref{re:triangle}.
	\item The copied vertices $V_2$ and $V_3$ are connected by dotted blue triangles, as shown in Figure~\ref{fi:verB}. Each such triangle connects three vertices that originate from three different vertices of~$D$. The rank of these edges is either 1 or 4, so that the triangle forms a preference cycle, i.e.~each edge is ranked first by one end vertex and fourth by the other one. Edges in the middle, ranked second and third, will be the dashed gray edges.
	\item The three copies of the same vertex in $G$ are connected by a dashed gray triangle in Figure~\ref{fi:verB}. These edges are ranked between outgoing and incoming edges in $D$ at vertices in $V_1$, and they are ranked second and third by vertices in $V_2 \cup V_3$. The order among them matters: the triangles themselves must form a preference cycle.
	\item All edges not visible in Figure~\ref{fi:verB} are ranked lower than the listed edges, in an arbitrary order.
\end{itemize}
Again, the verification will happen with respect to the assignment $R$ built by all dashed gray edges.

Suppose that there exists a \textsc{directed triangle cover} in~$D$. This means that each vertex in $V_1$ has an outgoing edge in the triangle cover, which is a first-choice roommate in the assignment problem. So switching to the coalitions marked by the triangle partition would be a Pareto improvement for the vertices in $V_1$. We need to take care of the vertices in $V_2 \cup V_3$ too. The dotted blue triangles complete the alternative assignment, and due to the cyclic nature of the preferences on them, they too assign each player a first choice roommate, which was not present in the original assignment.

Now suppose that there exists a Pareto improvement to the assignment $R$ marked by the dashed gray edges. In $R$, the best roommate of each player is ranked right below its outgoing black edges for vertices in~$V_1$, and second for vertices in $V_2$ and~$V_3$. In order to make at least one player, say, player $i$ better off, the alternative assignment $R'$ must allocate $i$ to a more preferred neighbor~$j$. This $j$ must be the end vertex of the directed solid black edge $(i,j)$, if $i \in V_1$, and $i$'s first choice roommate (dotted blue edge) if $i \in V_2 \cup V_3$. This is only possible if we find a (non-directed) triangle partition in the constructed graph so that it contains at least one solid black or dotted blue edge, which is~$(i,j)$. We now search for a third player to complete the coalition in~$R'$. Since $j$ just received a bad roommate in the person of $i$, she must have her first or second choice in the room as well, in order to keep her satisfied. Her second choice is one of her copies $j'$, but $j'$ only ranks $j$ third, and $i$ beyond all listed players, so her situation would worsen if we put her up in a room with $i$ and~$j$. On the other hand, $j$'s first choice edge $(j,\ell)$ is of the same color as $(i,j)$. The position of $\ell$ is not worse than in the original assignment if and only if $(i,j,\ell)$ forms a blue or a directed black triangle. This shows that only red, blue, or directed black triangles can appear in a Pareto improved assignment. The existence of such a monochromatic triangle partition implies that there is a directed triangle cover in~$D$. 
\qed \end{proof}

\section{Existence}
\label{se:existence}

In all investigated cases, if the instance admits a feasible assignment, then it also admits a Pareto optimal one. Due to monotonicity in the rank of the worst or best roommate, a chain of Pareto improvements starting at any feasible assignment must end in a \textsc{poa}, which is thus guaranteed to exist if a feasible assignment exists. This is the case if lists are complete, since all assignments filling up all rooms are feasible then. Not so for incomplete lists---by declaring some other players unacceptable, players can easily reach a situation where not even feasible assignments exist. We now show that in all cases with incomplete lists, deciding whether a feasible assignment, and thus, a \textsc{poa} exists, is $\NP$-complete. Notice that since feasibility is already hard, it bears no importance whether a player judges the coalition based on the best or the worst roommate.

\begin{theorem}[Existence, strict and incomplete lists, best or worst roommate counts, 3-person rooms]
\label{th:3SI}
If lists are incomplete, then deciding whether a feasible assignment exists is $\NP$-complete even if all preferences are strict, and every room is of size~3.
\end{theorem}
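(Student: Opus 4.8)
The plan is to reduce directly from \textsc{triangle cover}, exploiting the fact that with incomplete lists the feasibility condition alone already encodes a triangle-packing requirement and makes no reference to the best/worst comparison principle. This is why the same reduction will simultaneously settle the best- and worst-roommate variants. I would first dispose of membership in $\NP$: a feasible assignment is itself a polynomial-size certificate, and verifying it---checking that every room contains exactly three players and that all three players in each room mutually accept one another---takes polynomial time.

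For hardness, I would start from an arbitrary instance $G=(V,E)$ of \textsc{triangle cover}; by Remark~\ref{re:triangle} we may assume that $|V|$ is a multiple of~$3$. I create one player $p_v$ for every vertex $v \in V$, set all room capacities to $r_1=\cdots=r_{|V|/3}=3$, and declare $p_v$ to find $p_u$ acceptable exactly when $uv \in E$. Each player then orders their acceptable set by an arbitrary but fixed strict ranking, so the resulting lists are strict and (in general) incomplete, as required.

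The crux---a very short one---is that the acceptability relation constructed this way is symmetric and mirrors $E$ exactly: $p_u$ accepts $p_v$ iff $p_v$ accepts $p_u$ iff $uv \in E$. Consequently a triple $\{p_a,p_b,p_c\}$ forms a feasible room (all three pairwise acceptable) if and only if $\{a,b,c\}$ spans a triangle in~$G$, and a feasible partition of the players into size-$3$ rooms corresponds precisely to a partition of $V$ into triangles, i.e.\ to a triangle cover of~$G$. Since the reduction is clearly polynomial and \textsc{triangle cover} is $\NP$-complete, our existence problem is $\NP$-complete as well. Because feasibility never invokes point~(\ref{it:bw}), the very same instance proves the claim identically for the best- and for the worst-roommate settings.

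I expect essentially no obstacle here: the only thing to verify is the triangle/feasible-room correspondence, which is immediate from the symmetry of the constructed acceptability relation. The only point worth a sentence is that strict preferences cause no difficulty, since feasibility ignores the preference order entirely and any strict completion of the acceptability lists serves equally well.
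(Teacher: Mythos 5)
Your proposal is correct and matches the paper's own proof essentially verbatim: both reduce from \textsc{triangle cover} by identifying players with vertices and acceptable pairs with edges, choose the strict rankings arbitrarily since feasibility ignores them, and observe that feasible size-3 rooms correspond exactly to triangles. Your explicit treatment of $\NP$ membership and of the best/worst irrelevance is a welcome (if routine) addition the paper leaves implicit.
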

\begin{proof}
We show hardness via a reduction from \textsc{triangle cover}. Given a graph $G$ as the input of this problem, we associate players with vertices and acceptable roommate pairs with edges. The preferences within the set of acceptable agents can be chosen arbitrarily, because they play no role in feasibility anyway. Since feasible assignments must form coalitions of size exactly~3, each assignment in our problem corresponds to a triangle cover and vice versa.
\qed \end{proof} 

This theorem shows that it is computationally infeasible to find a \textsc{poa} if lists are incomplete, because even deciding whether any exists is $\NP$-complete. Moreover, even if we are given a feasible assignment---which guarantees the existence of a \textsc{poa}---it is also computationally infeasible to find a Pareto improvement, since deciding whether the given assignment itself is a \textsc{poa} is $\coNP$-complete, as shown in Theorems~\ref{th:verW} and~\ref{th:verB}. From this point on, we can thus restrict our attention to instances with complete lists. 

\section{Finding a Pareto optimal assignment}
\label{se:finding}

As already mentioned in the previous section, a \textsc{poa} is guaranteed to exist if lists are complete. Here we distinguish all $2^3$ cases based on three basic features of the problem, listed as points~(\ref{it:sw})--(\ref{it:3i}) in the Introduction. 

\subsection{Easy cases}

We start with describing and analyzing the tailored variants of serial dictatorship in the cases where it delivers a \textsc{poa}. As usual, the core of serial dictatorship is that each dictator specifies a set of solutions that guarantee her one of her most desirable outcomes, and all later dictators must choose similarly, but within the already specified set. We show here algorithms that implement this principle under the problem settings we study.

Our algorithms are simple, and suited for the very specific problems in question, whereas the Preference Refinement Algorithm (\textsc{PRA}) of Aziz et al.~\cite{ABH13} for computing a Pareto optimal and individually rational assignment in hedonic games is generic, but also more complicated. Besides this, it invokes an oracle solving a
problem they call 'Perfect Partition'. Perfect Partition asks for an assignment that gives each player one of her best outcomes. (Notice that in this paper, we use the term perfectness in its traditional, graph-theoretical sense, where it means that each player is matched.) \textsc{PRA} starts off coarsening, and then refining the preferences of players according to certain rules. Then, it calls the oracle to compute a Perfect Partition with the refined preferences, which is shown to be Pareto optimal in the original instance. Here we give a much more direct interpretation of serial dictatorship, focusing only on the specific problem variant we discuss in Theorems~\ref{th:3SCB} and~\ref{th:3SCW}. Our interpretations are then illustrated in Example~\ref{ex:easy}.

\begin{theorem}[Finding a \textnormal{\textsc{poa}}, strict and complete lists, best roommate counts, 3-person rooms]
\label{th:3SCB}
If lists are strict and complete, all rooms are of capacity~3, and the best roommate counts, then serial dictatorship delivers a \textsc{poa}.
\end{theorem}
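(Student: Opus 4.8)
The plan is to realise serial dictatorship as a lexicographic refinement of the set of feasible assignments and then apply the standard ``peeling'' argument to obtain Pareto optimality. Write $v_i(R)$ for $i$'s best roommate in a feasible assignment $R$, so that, since the best roommate counts, player $i$ prefers $R$ to $R'$ exactly when $i$ prefers $v_i(R)$ to $v_i(R')$. Fix the dictator order $p_1,\dots,p_n$ and let $\mathcal A_0$ be the set of all feasible assignments; because lists are complete every triple is acceptable and, as $3\mid n$, $\mathcal A_0\neq\emptyset$. For $k=1,\dots,n$ let $\mathcal A_k$ consist of those $R\in\mathcal A_{k-1}$ that give $p_k$ the most preferred best roommate achievable within $\mathcal A_{k-1}$. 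Every player takes a turn, so no ``filler'' escapes being optimised, and the procedure ends with all members of $\mathcal A_n$ sharing one and the same profile of best roommates; serial dictatorship outputs any such $R$.

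I would prove that any $R\in\mathcal A_n$ is a \textsc{poa} by contradiction. Suppose a feasible $R'$ Pareto dominates $R$, so every player is at least as well off in $R'$ and some player is strictly better off. I claim $R'\in\mathcal A_k$ for all $k$, by induction. The base case $R'\in\mathcal A_0$ is feasibility. For the step, assume $R'\in\mathcal A_{k-1}$. On the one hand $R$ gives $p_k$ the best roommate that is $\mathcal A_{k-1}$-optimal for her; on the other hand $R'$ weakly dominates $R$, so $p_k$ is at least as well off in $R'$, while $R'\in\mathcal A_{k-1}$ forbids $p_k$ from doing strictly better than the $\mathcal A_{k-1}$-optimum. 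Hence $p_k$ is indifferent between $R$ and $R'$, so $R'$ is also $\mathcal A_{k-1}$-optimal for $p_k$, i.e.\ $R'\in\mathcal A_k$. Therefore $R'\in\mathcal A_n$, so $R'$ carries the same best-roommate profile as $R$ and nobody is strictly better off, contradicting the assumption.

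What remains, and what I expect to be the real work, is to show that the simple greedy procedure described in the paper actually produces a member of $\mathcal A_n$, so that serial dictatorship is efficient and not merely well defined. Two structural features are decisive. Since only the best roommate counts, the $k$-th dictator needs to secure a single roommate: fixing $p_k$'s best attainable roommate already pins down $v_{p_k}$ and leaves the remaining slot free. Since every room has size $3$ and lists are complete, any partially filled room can be completed to a feasible triple, so passing from $\mathcal A_{k-1}$ to $\mathcal A_k$ never empties the set. Concretely I would let $p_k$, on her turn, claim as her best roommate the most preferred player she can still be co-assigned with, and then argue by an exchange argument that this is $\mathcal A_{k-1}$-optimal: any $R'\in\mathcal A_{k-1}$ giving $p_k$ a strictly better roommate would have to move a more preferred player $z$ into $p_k$'s room, and one checks that $z$ is either still free at this stage or merely a non-best roommate of an earlier dictator, from whom $z$ can be taken without spoiling that dictator's already-fixed best roommate.

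The principal obstacle is precisely this exchange argument, which is delicate exactly because best-roommate preferences are coarse: a dictator's optimal set is enormous and leaves third members entirely unconstrained, so ``$p_k$ cannot do better'' is not immediate and must be reconciled with the freedom to reshuffle the unconstrained filler slots of the earlier rooms. Everything else---feasibility, termination, and the peeling argument itself---is routine once completeness and $3\mid n$ are in hand.
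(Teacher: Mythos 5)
Your peeling argument is correct, but it proves only the generic half of the statement: that the \emph{abstract} refinement process $\mathcal A_0\supseteq\mathcal A_1\supseteq\cdots\supseteq\mathcal A_n$ ends in Pareto optimal assignments. That argument uses nothing about rooms of size~3 or best-roommate preferences (completeness and $3\mid n$ enter only to make $\mathcal A_0\neq\emptyset$), and it is not where the substance of Theorem~\ref{th:3SCB} lies. The theorem sits in the ``easy cases'' part of Section~\ref{se:finding}, whose whole point (cf.~Figure~\ref{fi:results}) is that a \textsc{poa} can be \emph{found} efficiently, in contrast with the hard cases; accordingly, the paper's proof defines a concrete, polynomial-time implementation of serial dictatorship and proves directly that its output admits no Pareto improvement. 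You yourself identify this as ``the real work''---showing that an executable greedy procedure lands in $\mathcal A_n$---and then leave precisely that step as an unfinished sketch, conceding that the exchange argument is delicate. So the proposal has a genuine gap, and the gap is exactly the content of the paper's proof.

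Concretely, the paper's implementation exploits the structural feature you point to but do not cash in: since only the best roommate counts, each dictator fixes only a \emph{pair} (herself and her best \emph{available} roommate), leaving the third slot of every room open for later dictators. Availability is governed by two rules: the two players together may have at most one already-assigned roommate (or must already share a room)---so two fixed pairs can never be merged into one 3-person room---and once all $k$ rooms are open, the pointed-at player must already live in one of them. Pareto optimality is then proved by taking a dominating assignment $R'$, choosing the \emph{earliest} dictator $i$ in the dictator order who improves, and arguing that her better roommate $j$ in $R'$ was unavailable at her turn only because of pairs fixed by earlier dictators; by minimality of $i$ those dictators improve nowhere in $R'$, and since they may not be hurt either (strict preferences), their fixed pairs must persist in $R'$, which is incompatible with $i$ and $j$ sharing a room of capacity~3. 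Your sketched exchange argument would have to reproduce exactly this case analysis---in particular the case where the coveted player $z$ is an earlier dictator's fixed partner (who cannot be taken at all), and the case where no further room can be opened---so, as it stands, the proposal does not yet prove the theorem in the algorithmic sense the paper intends.
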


\begin{proof}
The exact implementation of serial dictatorship works in rounds, as follows. The first dictator points at her first choice. We fix this pair, and immediately assign them their final room $R_1$, which will be completed by the third player later. 
The same happens in each round: the current dictator points at her first choice \emph{available} roommate, we fix this pair, and assign them a room. If one player in the new pair is already in a room, the other one joins her, otherwise a new room is opened for them.

For dictator $i$, player $j$ is available, if both of the following hold.
\begin{enumerate}
    \item The number of roommates already assigned to a room together with either $i$ or $j$ is at most one in total, or $i$ and $j$ are already assigned to the same room.
    \item If there is no further room to open, then $j$ must be a player already assigned to a room.
\end{enumerate}

We will now show that this procedure indeed delivers a \textsc{poa}. Assume indirectly that an assignment $R' = R'_1, R'_2, \ldots, R'_k$ Pareto dominates the outcome $R = R_1, R_2, \ldots, R_k$ of our algorithm. Among the players who are better off in $R'$, we choose the one who comes earliest in the order of dictators. Let this player be~$i$. We know that $j$, the best roommate of $i$ in $R'$, is strictly better than $i$'s best roommate in~$R$. In $i$'s turn in our algorithm, $j$ thus was not an available to~$i$. The reason for this must be one of the above two points.

If the first point was the reason, then some of the earlier dictators already reserved $i$ and $j$ for themselves. Since we assumed that $i$ is the earliest dictator who is better off in $R'$ than in $R$, the choices of all earlier dictators must stay intact in $R'$, otherwise some of them receives a worse best roommate, which contradicts the fact that $R'$ is a Pareto improvement. The second case is when all $k$ rooms had already been open as $i$ was considering to point to $j$, and neither of them had been assigned to a room yet. At least one of the fixed pairs must be split up in $R'$, if $i$ and $j$ are assigned to the same room in it. The earlier dictator in this pair must precede $i$in the order of dictators, and thus any change in her best allocated roommate must worsen her situation, which contradicts the Pareto improvement property.  
\qed \end{proof}

\begin{theorem}[Finding a \textnormal{\textsc{poa}}, strict and complete lists, worst roommate counts, $r_i$-person rooms, including 3-person rooms]
\label{th:3SCW}
If lists are strict and complete, the rooms are of arbitrary capacity, and the worst roommate counts, then serial dictatorship delivers a \textsc{poa}.
\end{theorem}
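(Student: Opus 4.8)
The plan is to first pin down the exact form serial dictatorship takes when the worst roommate counts, and then to establish optimality by an inductive ``nothing can change'' argument that processes the dictators in their fixed order, exactly as in the proof of Theorem~\ref{th:3SCB}, but driven by a uniqueness observation tailored to the worst-roommate objective.

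First I would describe the algorithm. We fix an order of the players and process them as successive dictators, always taking the earliest player not yet placed. Unlike the best-roommate case, here a dictator cares about her \emph{worst} roommate, so she wants a room in which even the least preferred occupant is as high as possible. Filling a room of capacity $c$ with her top $c-1$ available players yields a worst roommate equal to her $(c-1)$-st choice among the available players, and this choice only deteriorates as $c$ grows; hence her optimal move is to select a room of \emph{smallest} available capacity $r^\ast$ and fill it with her top $r^\ast-1$ available players. This fixes an entire room in one step, after which we delete these players and this room and continue. Completeness of the lists guarantees enough acceptable players are always available, and since each step removes a full room, the remaining players exactly fill the remaining rooms.

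The crucial fact, which I would isolate as a lemma, is a \emph{uniqueness} statement: given the set $A$ of available players and the multiset $C$ of available capacities, the best attainable worst roommate equals the dictator's $(r^\ast-1)$-st choice $x$ in $A$, where $r^\ast=\min C$, and it is attained \emph{only} by the room consisting of her top $r^\ast-1$ players in $A$. Indeed, to make the worst roommate equal to $x$ every occupant must be weakly preferred to $x$ and $x$ must be present, so the number of players weakly preferred to $x$ bounds the room capacity from above by $r^\ast$, while $\min C = r^\ast$ bounds it from below, pinning the room down exactly; the same inequalities show she cannot do strictly better than $x$. The main argument is then an induction on the dictators in their order. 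Assuming for contradiction that a feasible $R'$ Pareto dominates the output $R$, I would show every dictator occupies the \emph{same} room in $R'$ as in $R$. In the base case the first dictator had all players and rooms available, so her worst roommate in $R$ is globally optimal; as $R'$ makes nobody worse off she does at least as well, hence by the lemma exactly as well, in exactly the same room. For the step, assuming the first $j-1$ dictators have identical rooms in $R$ and $R'$, the same players and the same capacities remain for the $j$-th dictator in both assignments, and her $R'$-room is therefore forced to use only these remaining players and one remaining capacity, i.e.\ the very instance she optimized over in $R$; the lemma again forces her $R'$-room to coincide with her $R$-room. Since the dictators' rooms partition all players, $R'=R$, contradicting that $R'$ strictly improves someone.

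The main obstacle I anticipate is precisely the arbitrary room sizes, which is why the uniqueness lemma is indispensable: one must rule out the dictator \emph{sneaking into a smaller room} in $R'$ than any available to her in $R$. This is resolved by the induction, since every capacity below $r^\ast$ was already claimed by an earlier dictator and is frozen by the inductive hypothesis, so no such smaller room is free for her in $R'$ either; all the remaining reasoning reduces to the routine verification of the two capacity inequalities in the lemma.
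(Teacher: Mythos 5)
Your proposal is correct and takes essentially the same route as the paper: the identical smallest-available-room interpretation of serial dictatorship, followed by an induction over the dictators showing that any Pareto improvement would have to harm an earlier dictator. Your uniqueness lemma (the room of a dictator is pinned down exactly by the capacity lower bound $\min C$ and the counting upper bound from players weakly preferred to her $(r^\ast-1)$-st choice) just makes rigorous what the paper's much terser induction leaves implicit, so this is a faithful---indeed more carefully justified---rendering of the paper's argument.
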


\begin{proof}
If the worst roommate counts, serial dictatorship can be interpreted as follows. In each round, the dictator moves into one of the smallest available rooms of size $r_i$ with her best $r_i - 1$ choice roommates. The coalition is fixed and the room is removed from the set of available rooms.

To see correctness, we apply induction. Clearly, the price for improving the position of a dictator is to harm some previous dictator, because serial dictatorship gave her the fewest possible top choices still available on her list. Thus the output of the mechanism is a \textsc{poa}. 
\qed \end{proof}

\begin{example}
Figure~\ref{fi:sd} illustrates an instance. We run serial dictatorship in the three settings in which it delivers a \textsc{poa}. We assume the order of dictators to be 1, 2, \dots, 9.

\begin{figure}[htbp]
\centering
	\begin{minipage}{0.3\textwidth}
		\begin{tabular}{rcccccccc}
			1:&5&4&7&3&9&6&8&2\\
			2:&1&4&5&9&8&6&3&7\\
			3:&2&5&4&9&1&6&7&8\\
		\end{tabular}
	\end{minipage}
	\begin{minipage}{0.3\textwidth}
		\begin{tabular}{rcccccccc}
			4:&3&6&7&2&9&6&8&1\\
			5:&3&6&2&7&8&4&1&9\\
			6:&7&2&8&5&4&9&1&3\\
		\end{tabular}
	\end{minipage}
	\begin{minipage}{0.3\textwidth}
		\begin{tabular}{rcccccccc}
			7:&1&2&9&3&4&6&8&5\\
			8:&6&3&7&1&9&5&4&2\\
			9:&2&4&1&6&7&3&8&5\\
		\end{tabular}
	\end{minipage}
	\caption{An instance with 9 players and strictly ordered complete lists.}
	\label{fi:sd}
\end{figure}

\begin{itemize}
\item \textbf{strict and complete lists, best roommate counts, 3-person rooms} (Theorem~\ref{th:3SCB})\\
	The first dictator, player~1 chooses her first choice partner player~5 and they become a fixed pair. Then, the second dictator, player~2 chooses player~1, so these three form a fixed triplet. The third dictator cannot choose her first or second choice players~2 and~5, thus she becomes a fixed pair with player~4. Now it is exactly player~4 who is next to choose, but since she is already coupled up with her first choice roommate player~3, she does not change the current assignment. Player~5 is already in a fixed room. Player~6 opens the third room together with player~7, who then adds player~9 to this room, because her first two choices are already taken. Player~8 then joins the room of player~3. The outcome is the following partition: $(1~2~5), (3~4~8), (6~7~9)$.
	\item \textbf{strict and complete lists, worst roommate counts, 3-person rooms} (Theorem~\ref{th:3SCW})\\
	The first dictator, player~1 chooses her first  and second choice partners player~5 and~4, and they occupy a room. The next dictator is player~2, whose first 3 choices are taken, thus she moves into a room with her two best available choices, players~9 and~8. The remaining 3 players are assigned to the last room. The outcome is the following partition: $(1~4~5), (2~8~9), (3~6~7)$.
	\item \textbf{strict and complete lists, worst roommate counts, $r_i$-person rooms} (Theorem~\ref{th:3SCW})\\
	Let the rooms have capacity 2, 3, and 4, respectively. The first dictator, player~1 chooses her first choice partner player 5, and they occupy the smallest room. The next dictator is player~2, who cannot choose her first choice player~1, since her assignment is already fixed, thus she moves into the 3-person room with her best available choices, players~4 and~9. The remaining 4 players are assigned to the largest room. The outcome is the following partition: $(1~5), (2~4~9), (3~6~7~8)$.
\end{itemize}
\label{ex:easy}
\end{example}

\subsection{Hard cases}

In all remaining cases, finding a \textsc{poa} is computationally infeasible, even though it is guaranteed to exist. We show this in two steps, similarly to the technique used by Aziz et al.~\cite{ABH13}. First we observe that either all \textsc{poa}s of an instance or none of them satisfy the property that all players receive one of their best outcomes. Then, we show that an $\NP$-complete problem can be reduced to the decision problem of answering whether a \textsc{poa} exists with this property. From this follows that by computing \emph{any} \textsc{poa} in polynomial time, one could answer the $\NP$-complete decision problem in polynomial time. Below we give three proofs for three different settings. 


\begin{theorem}[Finding a \textnormal{\textsc{poa}}, strict and complete lists, best roommate counts, $r_i$-person rooms]
\label{th:iSCB}
If lists are strict and complete, the rooms are of arbitrary capacity, and the best roommate counts, then computing a \textsc{poa} is at least a hard as the $\NP$-complete \textsc{unary bin packing} problem.
\end{theorem}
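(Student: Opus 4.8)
The plan is to follow the two-step template stated just above the theorem. First I would fix the right notion: since the best roommate counts and lists are complete, a player attains her best possible outcome exactly when her globally top-ranked player shares her room; call an assignment \emph{perfect} if every player attains her best outcome in this way. The first step is an all-or-nothing lemma: if any perfect assignment $R^{*}$ exists, then every \textsc{poa} is perfect. Indeed, $R^{*}$ is itself a \textsc{poa}, as nobody can improve on a best outcome; and if some \textsc{poa} $R$ were not perfect, then in $R^{*}$ every player is at least as well off (each already holds her best outcome) while the player missing her best in $R$ is strictly better off, so $R^{*}$ would Pareto-dominate $R$---a contradiction. Hence a perfect assignment exists if and only if every \textsc{poa} is perfect.

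The second step is a reduction from \textsc{unary bin packing} to the existence of a perfect assignment. Given item sizes $s_1, s_2, \ldots$ (each at least $2$ by Remark~\ref{re:bin}) and bin size $b$, I may assume $b$ divides the total size $S=\sum_j s_j$, since otherwise the instance is a trivial no-instance. For each item $j$ I would create a \emph{group} of $s_j$ players and open $m=S/b$ rooms, each of common capacity $b$---a legitimate special case of the arbitrary-capacity setting. Inside group $j$ I arrange the first choices into one cycle $p_1\to p_2\to\cdots\to p_{s_j}\to p_1$, so that the top-ranked player of $p_a$ is $p_{a+1}$; all remaining list entries are filled arbitrarily, as only first choices matter for a best outcome. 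The effect of this cyclic structure is that every member of the group attains her best outcome if and only if each consecutive pair of the cycle is co-located, which, the cycle being connected, forces the whole group into a single room. The assumption $s_j\ge 2$ is exactly what lets every group carry such an internal first-choice cycle.

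It then remains to check the correspondence and assemble the pieces. If the packing instance is a yes-instance, placing the groups of each bin into one room (their sizes sum to $b$) keeps every group intact and produces a perfect assignment; conversely, any assignment with all groups intact reads off as a packing into bins of size $b$, so if no packing exists then every feasible assignment splits some group and fails to be perfect. Thus a perfect assignment exists if and only if the packing instance is a yes-instance. Feeding this into the lemma: given a hypothetical polynomial-time algorithm that returns \emph{some} \textsc{poa} $R$, one tests in polynomial time whether every player has her first choice in $R$; by the all-or-nothing lemma this holds precisely when a perfect assignment exists, i.e.\ when the packing instance is a yes-instance. Since the sizes are encoded in unary, the number of players $S$ is polynomial in the input size, so the whole reduction is polynomial, which establishes the claimed hardness.

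The main obstacle I anticipate lies in the gadget of the second step: pinning down a preference structure under which ``everyone receives a best outcome'' is equivalent to ``no group is split,'' and verifying that no player can ever be satisfied by a roommate from another group. The cyclic first-choice construction, together with $s_j\ge 2$ and rooms of common capacity $b$, is what makes this equivalence clean. By comparison, the all-or-nothing lemma of the first step is routine once the notion of a perfect assignment is in place.
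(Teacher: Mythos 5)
Your proposal is correct and takes essentially the same route as the paper: the identical reduction from \textsc{unary bin packing} (an item of size $s_j\ge 2$ becomes a group of $s_j$ players whose first choices form a cycle within the group, with rooms of common capacity $b$), combined with the same all-or-nothing observation about assignments in which every player gets a first-choice roommate, which the paper states in its two-step template preceding the theorem. The only difference is presentational: you spell out the all-or-nothing lemma and the divisibility check $b \mid S$ explicitly, both of which the paper leaves implicit.
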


\begin{proof}
We construct an instance of the \textsc{poa} problem to each input of \textsc{unary bin packing} with item size at least~2. We will show that all \textsc{poa}s order each player to a room with a first ranked roommate if and only if a bin packing exists. 

Each item of size $i\geq 2$ (recall Remark~\ref{re:bin}) in \textsc{unary bin packing} corresponds to $i$ players in the \textsc{poa} problem. The players of one item have their unique first choice player among themselves, in a circular manner. The rest of the preference lists can be chosen arbitrarily. To guarantee that all players have a first choice roommate, we need to keep every preference cycle together. This is equivalent to keeping the items of the bin packing problem unsplit, and thus it is possible if and only if there is a perfect bin packing.

This reduction from \textsc{unary bin packing} with bin size $b$ immediately implies that the proof is valid even if all rooms are of equal size~$b$. 
\qed \end{proof}

\begin{theorem}[Finding a \textnormal{\textsc{poa}}, ties, complete lists, best roommate counts, 3-person rooms]
\label{th:3TCB}
If lists are weakly ordered and complete, the rooms are of capacity~3, and the best roommate counts, then computing a \textsc{poa} is at least a hard as the $\NP$-complete \textsc{directed triangle cover} problem.
\end{theorem}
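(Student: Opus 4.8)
The plan is to follow the two-step scheme announced at the start of this subsection, only now reducing from \textsc{directed triangle cover} instead of from \textsc{unary bin packing}, exactly parallel to the proof of Theorem~\ref{th:iSCB}. First I would record the structural observation that powers the scheme. If some feasible assignment places \emph{every} player together with a top-ranked roommate, then that assignment is trivially a \textsc{poa}, and moreover it Pareto dominates any assignment in which some player lacks a top roommate: the deviating player is strictly better off, while nobody can be worse off since everyone already sits in their best indifference class. Hence if an ``all players best'' assignment exists at all, then \emph{every} \textsc{poa} must also give every player a top roommate; and if none exists, then no \textsc{poa} has this property. Consequently, computing a single \textsc{poa} in polynomial time would let us decide in polynomial time whether an ``all players best'' assignment exists, simply by checking the computed \textsc{poa} for this easily verifiable property.

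Next I would build the instance. Starting from a digraph $D=(V,A)$, I take one player per vertex and a room of capacity $3$ for each of the $|V|/3$ rooms (recall from Remark~\ref{re:triangle} that $|V|$ is a multiple of $3$). The preference list of the player corresponding to $v$ places \emph{all out-neighbours} of $v$, tied, in the top indifference class, with every remaining player ranked strictly below in an arbitrary order. Since the best roommate counts, $v$ attains her best possible outcome in a room exactly when at least one of her out-neighbours shares that room. Thus an ``all players best'' assignment is precisely a partition of $V$ into triples in which every vertex has an out-neighbour inside its own triple.

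It then remains to argue that such a partition exists if and only if $D$ admits a \textsc{directed triangle cover}. One direction is immediate: a directed $3$-cycle $a\to b\to c\to a$ is a triple in which each vertex points to a roommate. The converse is the delicate point, and is where the main obstacle lies: a triple in which every vertex merely has \emph{some} internal out-edge need not be a directed $3$-cycle --- for instance $a\to b$, $b\to a$, $c\to a$ meets the out-degree condition yet is not a cycle. To rule such spurious triples out, I would invoke that \textsc{directed triangle cover} is already $\NP$-complete on digraphs containing no directed $2$-cycle; the gadget used to establish its $\NP$-completeness consists only of consistently oriented triangles and so produces exactly such oriented instances. On a triple drawn from an oriented graph the condition that each of the three vertices has an internal out-edge forces a $3$-cycle: with at most one edge between each pair, at least three internal edges are needed to give all three vertices out-degree at least one, so the three internal edges form a $3$-vertex tournament, and the only such tournament without a sink is the directed $3$-cycle.

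Putting the pieces together, an ``all players best'' assignment exists if and only if $D$ has a directed triangle cover, and by the first step any polynomial-time routine that outputs a \textsc{poa} would decide \textsc{directed triangle cover}. Since the latter is $\NP$-complete, computing a \textsc{poa} in this setting is at least as hard, as claimed. The only subtlety to handle with care is the restriction to $2$-cycle-free instances, which I would either fold into Remark~\ref{re:triangle} or justify inline from the structure of the gadget.
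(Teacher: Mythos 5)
Your proposal is correct and follows essentially the same route as the paper: the same two-step scheme (an assignment giving every player a top-ranked roommate exists if and only if every \textsc{poa} does so), and the same reduction, in which the player for vertex $v$ ties all out-neighbours of $v$ in first place and ranks everyone else second. Where you genuinely diverge is in the backward direction of the equivalence, and there your treatment is more rigorous than the paper's. The paper's proof simply asserts that a directed triangle cover exists if and only if some assignment gives every player a first-choice roommate; it never confronts the spurious triples you point out, such as $a\to b$, $b\to a$, $c\to a$, in which every vertex has an internal out-edge yet no directed $3$-cycle exists. This is a real gap for digraphs containing $2$-cycles, and Remark~\ref{re:triangle} does not exclude them (simplicity of a digraph is ordinarily taken to allow oppositely oriented edge pairs). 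Your patch is sound: the gadget in the paper's own $\NP$-completeness proof of \textsc{directed triangle cover} contains no pair of oppositely oriented edges, and distinct gadgets share only original vertices, which are never joined to each other, so that problem is $\NP$-complete already on oriented graphs; and on an oriented graph your tournament argument (each triple needs three internal edges, one per pair, and a $3$-vertex tournament with no vertex of out-degree zero is the directed $3$-cycle) forces every triple of an ``all players best'' assignment to be a directed triangle. In short, you prove the same statement by the same reduction, but you supply the missing lemma that the paper's two-line equivalence quietly presupposes; if this were folded into Remark~\ref{re:triangle}, the paper's proof would be complete as written.
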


\begin{proof}
We construct an instance of the \textsc{poa} problem to each input of \textsc{directed triangle cover}. Let us consider a digraph $D$, where all vertices have at least one outgoing and at least one incoming edge, as stated in Remark~\ref{re:triangle}. Vertices in $D$ correspond to players in our assignment problem. If a player $i$ had an outgoing edge towards player $j$ in $D$, then $i$ ranks $j$ first. All other players are ranked second by~$i$.  
A directed triangle cover exists in $D$ if and only if there is an assignment where each player has at least one first choice roommate. This latter happens if and only if all \textsc{poa}s have this property.
\qed \end{proof}

\begin{theorem}[Finding a \textnormal{\textsc{poa}}, ties, complete lists, worst roommate counts, 3-person rooms]
\label{th:3TCW}
If lists are weakly ordered and complete, the rooms are of capacity~3, and the worst roommate counts, then computing a \textsc{poa} is at least a hard as the $\NP$-complete \textsc{triangle cover} problem.
\end{theorem}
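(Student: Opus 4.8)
The plan is to follow the two preceding hardness proofs and reduce from \textsc{triangle cover}. Given a graph $G=(V,E)$, which by Remark~\ref{re:triangle} we may assume has $|V|$ divisible by~3 and minimum degree at least~2, I create one player per vertex and set every room to capacity~3. Since the worst roommate counts and ties are permitted, I equip each player $i$ with a two-tier complete, weakly ordered list: every neighbour of $i$ in $G$ is a (tied) first choice, and every non-neighbour is a (tied) second choice.

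The first key observation is that, under the worst-roommate comparison, a player's best possible outcome is to share her room with two first choices; the \emph{worst} of her two roommates is a first choice exactly when \emph{both} of them are. Consequently, all three occupants of a room $\{i,j,k\}$ simultaneously attain their best outcome if and only if each of the three pairs is an edge of $G$, i.e.\ if and only if the room is a triangle of $G$. Hence an assignment in which every player receives her best outcome exists precisely when $V$ can be partitioned into triangles, that is, when $G$ admits a triangle cover.

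Next I would invoke the dichotomy described at the start of this subsection: if some assignment places every player at one of her best outcomes, then that assignment weakly Pareto dominates every assignment lacking this property, so every \textsc{poa} must also place every player at her best; and if no such assignment exists, then trivially no \textsc{poa} has the property. Combining this with the previous paragraph, all \textsc{poa}s assign each player two first-choice roommates if and only if $G$ has a triangle cover.

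Finally, given any \textsc{poa} $R$ computed in polynomial time, one can check in linear time whether every room of $R$ is a triangle of $G$, equivalently whether every player's worst roommate is a first choice. By the equivalence above, this test decides \textsc{triangle cover}, so a polynomial-time algorithm producing an arbitrary \textsc{poa} would settle the $\NP$-complete \textsc{triangle cover} problem, establishing the claimed hardness. The only step requiring genuine care is the dichotomy argument—verifying that the existence of a single ``everyone-at-best'' assignment forces the property onto \emph{all} \textsc{poa}s—but this follows directly from the weak Pareto-domination just sketched, so I expect the routine verifications (completeness of the lists and the triangle characterisation) to be straightforward.
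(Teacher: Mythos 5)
Your proposal is correct and takes essentially the same approach as the paper: the identical reduction from \textsc{triangle cover} with neighbours as tied first choices and non-neighbours as tied second choices, the same characterisation of everyone-at-best rooms as triangles of $G$, and the same dichotomy argument (either all \textsc{poa}s or none give every player two first-choice roommates) to turn a hypothetical \textsc{poa} algorithm into a decision procedure. You merely spell out the dichotomy and the final checking step more explicitly than the paper, which states that observation once for all three hardness proofs in the preamble of the subsection.
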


\begin{proof}
To each input $G$ of the \textsc{triangle cover} problem, we now construct an instance of the \textsc{poa} problem. Remember that according to Remark~\ref{re:triangle}, $G$ has no isolated or degree 1 vertex. Starting with $G$, let us assign rank~1 to all neighbors of each player. Now we complete $G$ by adding all missing edges and assign rank 2 on both end vertices of such an edge.

We claim that there is a \textsc{poa} that gives every player two of her first ranked roommates if and only if a triangle cover exists in~$G$. If a triangle cover exists in $G$, then it delivers an assignment consisting of original edges only, thus it is possible to assign each player into a room with first-choice roommates only. Since each player reaches her best outcome in this assignment, it also must be Pareto optimal. To see the other direction, we assume that there is a \textsc{poa} that orders each player to a room with only first ranked roommates. This assignment must then consist of the edges of $G$ exclusively, and thus it is a triangle cover in~$G$.
\qed \end{proof}

\section{Conclusion}

We have studied complexity issues in the Pareto optimal coalition formation problem in which players have preferences over each other, and the coalitions must be of a fixed size. We have investigated a number of variants of this problem and determined the complexity of verifying Pareto optimality, deciding the existence of a \textsc{poa}, and finding a \textsc{poa}.

One natural direction of future research is to forgo the requirement on the perfectness of the assignment. In this case, due to the feasibility of the empty assignment and monotonicity, a \textsc{poa} trivially exists, so our question raised in Section~\ref{se:existence} about the existence of an optimal solution does not apply. However, allowing the assignment to be imperfect leads to unnatural strategies in serial dictatorship. If the worst roommate matters, then the dictator is better off choosing her single best roommate and not letting anyone else into the room, however large it is. Besides this, one needs to clarify how to deal with the option of staying alone in a large room, which was prohibited in our setting. Depending on the total capacity of rooms, and the way of calculating the value of an imperfect coalition, one can define several natural variants of the problem. We remark that our results in Sections~\ref{se:verification} and~\ref{se:finding} remain valid for imperfect assignments as well, if the total capacity of rooms equals the number of players, but some agents might be left unassigned---one needs to marginally adjust serial dictatorship, while the hardness proofs carry over.

Besides this, one might consider to investigate analogous problems with cardinal preferences instead of ordinal ones, an outside option for players, or strategic behaviour.

\bibliographystyle{abbrv}
\bibliography{mybib}
\end{document}